\newtheorem{theorem}{Theorem}[section]
\newtheorem{lemma}[theorem]{Lemma}
\newtheorem{definition}[theorem]{Definition}
\newtheorem{corollary}[theorem]{Corollary}
\newtheorem{observation}[theorem]{Observation}
\newcommand{\ignore}[1]{}
\newcommand{\hide}[1]{}
\newcommand{\A}{\mathcal{A}}
\let\Pr\undefined
\DeclareMathOperator*{\Pr}{\mathbb{P}}
\DeclarePairedDelimiter{\ceil}{\lceil}{\rceil}
\DeclarePairedDelimiter{\floor}{\lfloor}{\rfloor}
\newcommand{\cA}{\mathcal{A}}
\newcommand{\cB}{\mathcal{B}}
\newcommand{\cP}{\mathcal{P}}
\newcommand{\esh}[1]{\textcolor{red}{[Eshwar: #1]}}
\newcommand{\ar}[1]{\textcolor{blue}{[AR: #1]}}
\newcommand{\nat}[1]{\textcolor{blue}{[Natalie: #1]}}
\newcommand{\pmax}{p_{\text{threshold}}}
\title{Algorithmic Collusion Without Threats}
\author[1]{Eshwar Ram Arunachaleswaran}
\author[1]{Natalie Collina}
\author[1]{Sampath Kannan}
\author[1]{Aaron Roth}
\author[2]{Juba Ziani}
\affil[1]{University of Pennsylvania, Department of Computer and Information Sciences}
\affil[2]{Georgia Tech ISyE}
\begin{document}

\maketitle
\begin{abstract} 
There has been substantial recent concern that automated pricing algorithms might learn to ``collude.''  Supra-competitive prices can emerge as a Nash equilibrium of repeated pricing games, in which sellers play strategies which threaten to punish their competitors if they ever ``defect'' from a set of supra-competitive prices, and these strategies can be automatically learned. But threats are anti-competitive on their face. In fact, a standard economic intuition is that supra-competitive prices  emerge from \emph{either} the use of threats, \emph{or} a failure of one party to correctly optimize their payoff. Is this intuition correct? Would explicitly preventing threats in algorithmic decision-making prevent supra-competitive prices when sellers are optimizing for their own revenue?  

No. We show that supra-competitive prices can robustly emerge even when both players are using algorithms which do not explicitly encode threats, and which optimize for their own revenue. Since deploying an algorithm is a form of commitment, we study sequential Bertrand pricing games (and a continuous variant) in which a first mover deploys an algorithm and then a second mover optimizes within the resulting environment. We show that if the first mover deploys \emph{any} algorithm with a no-regret guarantee, and then the second mover even approximately optimizes within this now static environment, monopoly-like prices arise. The result holds for any no-regret learning algorithm deployed by the first mover and for \emph{any} pricing policy of the second mover that obtains them profit at least as high as a random pricing would --- and hence the result applies even when the second mover is optimizing only within a space of non-responsive pricing distributions which are incapable of encoding threats. In fact, there exists a set of strategies, neither of which explicitly encode threats that form a Nash equilibrium of the simultaneous pricing game in algorithm space, and lead to near monopoly prices. This suggests that the definition of ``algorithmic collusion'' may need to be expanded to include strategies without explicitly encoded threats.
\end{abstract}

 \thispagestyle{empty} \setcounter{page}{0}
 \clearpage


\section{Introduction}Consider a market in which there are two sellers of some commodity. We hope and expect that competition should drive the price of the good in the market down to its production cost---for if seller A was consistently selling above cost, then there would be an opportunity for seller B to slightly undercut seller A, capturing the whole market and increasing their profit. In fact, this is the Nash (and correlated) equilibrium outcome in several
 classical formalizations of such competition including the Bertrand Duopoly model. We refer to pricings that are at (or near) the Nash equilibrium prices as \emph{competitive prices}.

As the name suggests, competition is essential to maintaining competitive prices: a monopolist could freely set the price of a good as high as the market could support (which, for goods with inelastic demand, could be quite high indeed) without fear of being undercut by a competitor. So, if in a market with multiple participants, we see prices that are closer to the monopoly price than the competitive price, we might suspect ``anti-competitive'' behavior on the part of the sellers. 

Yet, defining anti-competitive behavior turns out to be a difficult exercise. The Sherman act has been interpreted to require express agreement to coordinate on prices in the form of overt communication as a pre-requisite for liability. However, a recent line of work has focused on the ability of \emph{pricing algorithms} to arrive at supra-competitive prices without explicit communication---concerns rooted in real world observations \citep{assad24} and a growing academic literature \citep{ABADA2024,lamba2022,hansen2021frontiers}.  The mechanism underlying most results in this literature ultimately are rooted in ``folk-theorem'' style equilibria of repeated pricing games~\citep{benoit1984finitely,littman2003polynomial}. Broadly speaking, folk theorems establish that all pairs of prices that guarantee the sellers at least the profit they obtain with competitive pricing can be realized as equilibria of the repeated game. Crucially, the strategies which are shown to lead to these anti-competitive equilibria have explicitly encoded threats aimed at punishing the other seller if they deviate from a proscribed pricing path. Further, reinforcement learning algorithms can automatically discover these strategies; recent work~\citep{calvano20,klein2021} has shown both empirically and analytically that these algorithms can converge to supra-competitive prices, and they do so by the emergence of punishments and threats, even without being explicitly programmed to learn such strategies. Using such algorithms may not violate current anti-trust law as they do not involve explicit communication, but one can imagine future legislation and case law that forbids the use of algorithms that make use of threats (appropriately defined). An algorithm is after all an explicit commitment to action as a function of its observations, and committing to an algorithm might be interpreted as communicating the content of the threat encoded within the algorithm. 

To formally rule out ``anti-competitive'' behavior, an approach one might take is to define algorithms that do not encode threats.  In fact, the standard economic stance \citep{harrington,calvano20} has been that threats are an \emph{explicit} requirement for collusion, distinguished from other situations in which supra-competitive prices can arise due to a ``failure to optimize:''
\begin{quote}
    ``To us economists, collusion is not simply a synonym of high prices but crucially involves ``a reward-punishment scheme designed to provide the incentives for firms to consistently price above the competitive level'' (Harrington 2018, p. 336). The reward-punishment scheme ensures that the supra-competitive outcomes may be obtained in equilibrium and do not result from a failure to optimize.'' \citep{calvano20}
\end{quote}
 A first attempt might be to say that (at least if there are other optimizers in the market) ``non-responsive'' algorithms which set prices completely independently of competitor prices (and consequences of competitor prices) should be permitted---indeed, if they cannot react to their competitors, then they cannot threaten them. But this is far too limiting as it does not allow for algorithms that react to market conditions. Recent work by~\citet*{hartline2024regulation} and~\citet*{chassang23} set out a principled definition of what constitutes non-collusive behavior of a pricing algorithm by defining algorithms that satisfy some internal consistency properties (calibrated regret, also known as ``swap regret'') and then shows that if all algorithms in the market satisfy these conditions then the outcome will be competitive prices. Can this condition be relaxed? What if the first entrant to a market deploys (and hence commits to) an algorithm that satisfies these internal consistency conditions, and then the second entrant plays a simple policy that does not satisfy the swap regret condition, but also does not encode threats (e.g. because it is entirely non responsive)?  Might we still see competitive prices? Would we, in accordance with the intuition of~\citep{harrington, calvano20} see competitive prices if the follower not only did not deploy threats, but also succeeded in ``optimizing''? 

In this paper we show that the answer is robustly ``no'' by showing how a broad class of ``reasonable'' behavior on the part of all parties must lead to supra-competitive (near monopoly) prices. In particular, with the view that deploying an algorithm is inherently a ``commitment'' (at least in the short term), we study two seller pricing games in a leader-follower model, in which the leader first deploys an algorithm, and then the follower deploys an algorithm which may optimize within the market environment that is implicitly defined by the leader's pricing algorithm. If both sellers viewed this scenario as a Stackelberg game, and optimized over the set of all pricing algorithms,  it would hardly be surprising if anti-competitive behavior emerged. Indeed, the Stackelberg equilibria in algorithm space (See~\citep{collina2023efficient}) in a repeated pricing game would involve the leader committing to an algorithm that prescribes monopoly prices and commits to punish the follower if she deviates from this prescription. 

However, we instead study the behavior that emerges if sellers attempt to play reasonable algorithms. In particular, we assume the leader (following the principle set out by \citep{hartline2024regulation,chassang23}) commits to setting prices using an arbitrary \emph{no-regret} algorithm. No-Regret algorithms are widely understood to be a ``natural" class of algorithms for use in competitive environments~\citep{nekipelov2015econometrics}, and our results will hold for any algorithm in this class, which includes the no-swap regret algorithms studied by \citep{hartline2024regulation,chassang23}. 

Once the leader deploys a no-regret learning algorithm, the other seller (whom we call the follower, or the optimizer) deploys an algorithm to optimize their own revenue in the environment defined by the leader's no-regret learning algorithm. Once again, we do not assume that the follower is best responding in algorithm space: they may, e.g., use a heuristic reinforcement learning algorithm to optimize over a limited set of policies---even \emph{non-responsive} policies that are explicitly independent of the leader's prices, and hence cannot encode threats.  We show that however they choose to optimize their own revenue, if they are even marginally successful (choosing any policy that guarantees them at least the revenue that a random pricing strategy would), then in combination, the two algorithms will inevitably lead to supra-competitive prices in a variety of settings. When this happens, it is not just the follower who enjoys the revenue of the high prices, but as we prove, also the leader (if the leader deployed a no-regret learning algorithm). Thus the leader will  have no strong incentive to deviate from their deployed algorithm, even if they deployed it without initially being aware of the competitive environment. In fact, within the class of ``reasonable'' strategies we study for the leader and follower (a no-swap-regret algorithm for the leader and a non-responsive algorithm for the follower), there exists a Nash equilibrium of the simultaneous pricing game in algorithm space, which gives \emph{no party} any incentive to deviate, supports near-monopoly prices, and does not involve threats. This suggests that it may be worth revisiting the standard economic viewpoint \citep{calvano20,harrington} that \emph{threats} are a necessary precondition for collusion.

\subsection*{Summary of Results}

We study two types of models: a simple Bertrand duopoly model, where the firm with the lowest price captures the entire demand, and a smoother multinomial-logit-based model where firms with higher prices can still capture some of the demand. Our multinomial logit model includes a temperature parameter $\tau$ that controls how sensitive the demand for each firm is to prices, and  converges to a Bertrand competition when the parameter $\tau$ grows large. Within these models, we prove the following results:

\begin{enumerate}
    \item The second entrant into the market (whom we call the optimizer) can extract a constant fraction of the monopoly price revenue against any no-regret algorithm by playing a non-responsive strategy (i.e. a fixed distribution of prices across all rounds, that does not depend on or respond to the history of play). As a result, if they approximately optimize their pricing policy (in any set of policies that includes natural fixed price distributions), they will be guaranteed a constant fraction of the monopoly revenue. Observe that this holds even if the class they optimize over \emph{only} includes non-responsive pricing strategies, which are incapable of encoding threats.
    \item Simultaneously, by deploying any no-regret algorithm, the first entrant (whom we call the learner)  extracts a constant fraction of the revenue obtained by any optimizer strategy that obtains a constant fraction of the monopoly price revenue. Taken in conjunction with the previous result, this implies that if the first entrant into the market deploys (and hence commits to) a no-regret learning algorithm, and the optimizer deploys any strategy that approximately optimizes over any class of policies that includes fixed price distributions, \emph{both} parties will extract a constant fraction of the monopoly price revenue. This has important implications: not only do supra-competitive prices emerge, but the benefits are shared across both sellers, giving neither a substantial incentive to deviate. In other words, the optimizer is not ``exploiting'' the no-regret learner, but rather implicitly cooperating with them, despite the absence of threats.
    \item In fact, we show that if the first entrant into the market deploys a no-swap regret algorithm, then there is a non-responsive (i.e. fixed) price distribution for the optimizer that forms a \emph{Nash} equilibrium of the simultaneous pricing game in algorithm space: i.e. \emph{neither} player has any incentive to deviate from their selected algorithm. This demonstrates the existence of a Nash equilibrium of the game that supports supra-competitive prices and yet does not involve threats, contra to the economic intuition \citep{calvano20,harrington} that supra-competitive prices result from \emph{either} threats or a failure to optimize. 
\end{enumerate}

When we refer to supra-competitive prices in the above theoretical results, we mean prices that are within a constant factor of the monopoly price. The constants in our theorems are small, but we provide numerical evidence that the constant is in fact no smaller than $2/e \approx 0.74$. 

We also prove an additional result about the Bertrand duopoly model:

\begin{itemize}
    \item Our results apply to \emph{any} no regret learning algorithm deployed by the first entrant into the market. This is a super-set of the no-swap regret learning algorithms proposed to be definitionally ``competitive'' by \citep{hartline2024regulation,chassang23}. Should no regret algorithms beyond no-swap regret algorithms also be considered competitive? This is not the case for all no-regret dynamics---~\cite{nadav2010no} show the existence of coarse correlated equilibria supported on supra-competitive prices in Bertrand pricing games, implying the existence of no-regret dynamics that result in such prices. 
    However, we show that when both sellers use Mean-based No-Regret (MBNR) algorithms (a class which includes common no-regret learning algorithms such as multiplicative weights and follow the perturbed leader) to set prices, the empirical distribution of play always converges to the competitive price. 
    This suggests that at least in the Bertrand Duopoly model, mean based no-regret learning algorithms might also be considered reasonable/competitive pricing strategies, in addition to the no-swap regret learners proposed by \citep{hartline2024regulation,chassang23}.  We note that this result does not necessarily carry over to the more complex pricing models studied by \citep{hartline2024regulation}.
    
\end{itemize}



\subsection{Related Work}

The seminal work by \cite{calvano20} experimentally shows that Q-learning algorithms, a simple and central type of reinforcement learning algorithm, can learn to collude with each other via threats in a repeated, simultaneous-action pricing game. These algorithms learn to deviate to lower prices in response to another algorithm doing so, and slowly rise back to higher prices. \cite{klein2021} attain similar results in the sequential repeated pricing game. The authors in these works are specifically concerned with the ability of the algorithms to learn clear threats, as opposed to simply reaching supra-competitive prices. This is  because, from the perspective of many economists, such threats are a requirement for collusion. The prevailing view, as espoused by \cite{harrington} and \cite{calvano20}, is that any algorithms which reach supra-competitive prices without threats must be suboptimal, and thus uninteresting as candidate pricing algorithms. 

This perspective is supported by \cite{Abada2020} (experimentally) and \cite{ABADA2024} (theoretically). These works consider a different class of reinforcement learning algorithms and show that, while they do not converge to supra-competitive prices when they have full computational capacity, they can reach supra-competitive prices if these algorithms are limited in their ability to explore. These supra-competitive prices arise even though these algorithms do not retain enough state information to encode threats. Therefore, while these works do show supra-competitive prices without threats, they are only shown arising from algorithms which are specifically designed to be suboptimal. 

Another line of research has explored the emergence of supra-competitive prices as a byproduct of algorithmic symmetry. \cite{hansen2021frontiers} show that, if two identical, deterministic algorithms are played against each other, their ``exploration'' will be symmetric, and hence they will be unable to ever observe the benefits of undercutting their competitors (an inherently asymmetric outcome). 

~\cite{banchio2023} study algorithms that attempt to estimate the payoff of actions through stochastic experiments. They show that in certain settings in which players update the reward for the action they play, but not other actions, then the experiments can correlate the actions of the players, leading to supra-competitive prices. In contrast, the algorithms we study operate in the full information setting (because agents can observe the price played by the other agent, they can estimate the profit they would have made had they played a different price at a given time period), and are designed for adversarial settings, which are well specified in competitive environments: they are designed with the understanding that there are no reward distributions to estimate because profits depend on the evolving actions of an opponent.

In contrast to these previous works, which argue variously that the property of algorithms which leads to supra-competitive pricing are threats and sub-optimality, we argue that the core is in the fact that deploying an algorithm is a form of commitment. This is most in line with the perspective of \cite{lamba2022}. This work treats algorithms as a form of commitment which contributes to collusion. However, the algorithms explored in that work are constrained to simple, two-state automata, which severely limits the strategy space. By contrast, we consider the space of all possible algorithms mapping histories to future prices, and show results for broad families of ``reasonable'' algorithms --- not just equilibrium strategies.

There is also a recent line of work on regulating pricing algorithms. Both \cite{chassang23} and \cite{hartline2024regulation} suggest that observed \emph{swap regret} might be taken as a proxy for collusive behavior. They show, in a related but more general model than our own, that if both sellers in a duopoly have vanishing swap regret, the time-average price must converge to the competitive price. Our work highlights the importance of \emph{both} sellers having no-swap regret; as we show, if the first entrant into a market deploys a no swap regret algorithm, then many natural behaviors for the second entrant lead to supra-competitive prices. 

Finally, there is a recent line of experimental work demonstrating anti-competitive behavior in large language models, including market division in the Cournot competition model~\cite{lin2024} and, more similarly to this work, price fixing in the Bertrand competition model~\cite{fish2024}. These works underscore the importance of understanding the theoretical underpinnings of algorithmic collusion. 

The emergence of anti-competitive behavior in no-regret dynamics is also explored in~\cite{Kolumbus2022}, though in a somewhat different setting---this work considers agents deploying no-regret algorithms to bid for them in an auction, and shows that agents misreporting their true item value to the algorithms can lead to collusive outcomes.


\section{Model}

\subsection{The Single-Stage Pricing Game} 

We will first formally define the  payoff model of the stage game, in which both sellers select a price only once and specify the payoff they obtain.

We work in a duopoly model, where two sellers are competing to sell to a single buyer. Both sellers pick from a discrete set of prices $\cP = \{\frac{1}{k}, \frac{2}{k},\cdots 1\}$. Here, $k$ is a discretization parameter, and we assume that $k \ge 20$.

\paragraph{Game definition} We consider two sellers, who we will refer to as the Sellers 1 and 2. 
The action space of each seller is the set of possible prices in $(0,1]$, up to some discretization parameter $1/k$. Both sellers can choose to distributions of prices from the set $\cP$, and we denote the set of all distributions by $\Delta^{\cP}$. 

To define our game, we first need to define the concept of an allocation rule. An \emph{allocation rule} is a (potentially randomized) rule that divides demand across the two sellers as a function of their chosen prices:

\begin{definition}[Allocation Rule $C_{i}(p_{1},p_{2})$]
An \emph{allocation rule} $C_{i}: \cP \times \cP \mapsto [0,1]$ is a mapping from a price pair $(p_{1},p_{2})$ from  the two sellers, respectively, to a fraction of the demand that will be allocated to seller $i \in \{1,2\}$. 
\end{definition}

Given an allocation rule that defines and controls market outcomes as a function of prices, we can now define the payoff of a seller in our single-stage game as a function of said  allocation rule $C$. 

\begin{definition}[Seller Payoff in the Stage Game]\label{def:utilities}
Given an allocation rule $C(.,.)$, the payoff of seller $i$ is given by
\begin{equation}
u_{i}(p_{1},p_{2})= p_{i} \cdot C_{i}(p_{1},p_{2}).
\end{equation}

When the sellers play distributions, the payoff of the seller is understood to be the expected payoff.
\end{definition}

We define the buyer price as the price charged by the seller that is the output of the allocation rule.
\begin{definition}[Average Buyer Price in the Stage Game]
\label{def:bprice}
For an allocation rule $C$ and (mixed) strategies $p_1$, $p_2$, the average price paid by the buyer is $\mathbbm{E}[p_1 C_1(p_1,p_2) + p_2 C_2(p_1,p_2)]$. Note that this means that the average buyer price is equal to the (expected) sum of payoffs of the two sellers.
\end{definition}

Throughout this paper we assume that the \emph{total} demand (added across both sellers) for the item being sold is fixed and independent of the prices. Without loss of generality, this total demand is set to be normalized to $1$. As a result, the ``monopoly price'' in these models is always $1$. Within this fixed-demand model, we consider two allocation rules: a Bertrand model and a multinomial-logit-based model, both formalized in Section~\ref{sec:competition_models}. Both allocation rules are {\it symmetric} and therefore the resulting games are {\it symmetric } as well.


We can now formally define the one-stage game studied in this paper, often informally referred to simply as the ``stage game'':
\begin{definition}[Pricing Stage Game]
    A two-player pricing stage game $G(k,C)$ is defined as a combination of: 
    \begin{enumerate}
    \item Two sellers, respectively named $1,~2$;
    \item An action space $A = \cP = \{\frac{1}{k},\frac{2}{k},\ldots,1\}$ corresponding to possible prices that can be played by $1$ and $2$. We denote by $p_1$ the action of Seller $1$ and $p_2$ the action of seller 2. These actions may be picked from the set of distributions over $\cP$, which we call $\Delta^{\cP}$.
    \item Payoff functions $u_i(p_1,p_2)$ for $i \in \{1,2\}$, defined as a function of the competitive allocation rule $C$ as in  Definition~\ref{def:utilities}.
    \end{enumerate}
    A particular pricing game is defined by the allocation rule $C$ and the discretization parameter $k$. We will work with symmetric allocation rules (the allocation does not depend on the labels of the sellers), which will naturally induce a symmetric finite game between the two sellers.
\end{definition}

The one-stage game described is this section is the starting point of our study, one that helps us define competitive prices as the Nash equilibria outcome of the game. However, the primary object of our study is \emph{repeated} interactions between sellers, described in Section~\ref{sec:repeated_game}.

\paragraph{Equilibria of the Stage Game}
We will refer to two kinds of equilibria of the stage game which differ in their assumed timing. The \emph{Nash} equilibria characterize outcomes in simultaneous play, whereas \emph{Stackelberg} equilibria characterize outcomes of optimal play when one player may first commit to a strategy, and the other player then responds.

\begin{definition}[Nash Equilibrium of the Stage Game]
A pair of (independent) distributions $(p_1,p_2)$ for the two sellers  is said to be an $\varepsilon$-Nash equilibrium  if $u_1(p_1,p_2) \ge \max_{p \in \cP} u_l(p,p_2) - \varepsilon$ and $u_2(p_1,p_2) \ge \max_{p \in \cP} u_2(p_1,p) - \varepsilon$. When $\varepsilon = 0$, this pair is a Nash equilibrium.
\end{definition}

\begin{definition}[Stackelberg Equilibrium of the Stage Game]
A pair of (independent) distributions $p_1,p_2$ for the two sellers is said to be an $\varepsilon$-Stackelberg equilibrium of the stage game with seller 1 as the leader if 
\[
u_2(p_1,p_2) \ge \max_{p \in \cP} u_2(p_1,p),
\]
and 
\begin{align*}
u_{1}(p_{1},p_{2}) \geq 
\max_{(q_{1}, q_{2} )\in \Delta^{\cP}} &u_{1}(q_{1},q_{2}) - \varepsilon\\
\text{s.t.}~~& u_2(q_1,q_2) \ge \max_{q \in \cP} u_2(q_1,q).
\end{align*}
When $\varepsilon = 0$, this pair is a Stackelberg equilibrium.
\end{definition}

\subsubsection{Competition Models}\label{sec:competition_models}
Now that we have defined the pricing stage game in full generality, we will instantiate it with two different specific competition rules, which result in two well-known economic pricing models. 

\paragraph{Bertrand Competition: } In the Bertrand Competition model, the seller picking the lower price captures all of the demand. The seller picking the higher price earns nothing. For cases in which sellers chose the same price $p_L$, they split the market equally.

\begin{definition}[Bertrand Competitive Allocation Rule $C^{B}_{i}$] $C^{B}_{i}$ is a competition rule such that, given player $j \neq i$,

\begin{equation}
C^{B}_{i}(p_{1},p_{2}) =
\begin{cases}
1 & p_{i} < p_{j},\\
\frac{1}{2} & p_{i} = p_{j},\\
0 & p_{i} > p_{j}
\end{cases}    
\end{equation}

\end{definition}

\begin{definition}[Bertrand Pricing Stage Game $G^{B}(k)$]
    The \emph{Bertrand Pricing Game} is a pricing game instantiated with the competition rule $C^{B}$ and with a pricing discretization of $\frac{1}{k}$.
\end{definition}


The Bertrand model is discontinuous in that the seller who selects the lower price captures the entire market. We also study a generalization of Bertrand competition in which the fraction of the market captured by a seller is a smooth function (a logit function) of the difference in prices set by the two sellers. This model has a temperature parameter $\tau$, and as $\tau$ tends to infinity, it recovers the Bertrand competition model as a special case.


\paragraph{Multinomial-logit-based Price Competition: } 
This competition model is parameterized by a temperature parameter $\tau \geq 0$ 

\begin{definition}[Logit Competitive Allocation Rule $C^{L, \tau}_{i}$] $C^{L,\tau}_{i}$ is a competition rule such that for $j \neq i$:
\begin{equation}
C^{L,\tau}_{i}(p_{1},p_{2}) =
\frac{e^{\tau p_{j}}}{e^{\tau p_{i}}+e^{\tau p_{j}}}
\end{equation}
\end{definition}

\begin{definition}[Multinomial-logit-based Pricing Stage Game $G^{B}(k,\tau)$]
    The \emph{Multinomial-logit-based Pricing Game} is a pricing game instantiated with the competition rule $C^{L,\tau}$ and with a pricing discretization of $\frac{1}{k}$.
\end{definition}


We prove similar results as in the Bertrand model in this logit model, assuming that the temperature parameter is sufficiently large, i.e., the allocation rule is sufficiently sensitive to differences in price. As $\tau \rightarrow \infty$, we recover the Bertrand model in the limit, since the lower price seller wins the demand with probability $1$. Likewise, when $\tau = 0$, the sellers each win with probability 1/2 regardless of their price, so it is to be expected that we need $\tau$ to be reasonably large to prove any meaningful results.

This model is a special case of the logit model of~\cite{calvano20} and is obtained by setting all ``product indices'' in their model to be equal, and by removing the outside good from consideration (by setting  its product index $a_0 $ to $ -\infty$), the latter action fixing the total demand.

For simplicity, in the main body of the paper we prove our results in the Bertrand game (Section~\ref{sec:bertrand}). We show that they generalize to the Multinomial-Logit-Based pricing game in Appendix~\ref{app:logit}.

\subsection{The Repeated Pricing Game}\label{sec:repeated_game} 

We now move from the stage game to the repeated pricing interaction that will be our main setting of interest. 
The  pricing game is played repeatedly over $T$ rounds. During each round $t$, the two sellers simultaneously 
pick mixed strategies $p_{1,t},~p_{2,t}$ over the set of prices $\cP$ (these are independent distributions over $\cP$). The sellers receive the corresponding utilities equal to the expected payoff of pricing $p_{1,t}$ against $p_{2,t}$ in the stage game. Both sellers observe the full mixed strategy of their opponent.
Here, the pricing behavior of each player will be defined by algorithms which map history to the pricing distribution at the next round, and we study this larger interaction as a game in which the strategy space for each player is the algorithm that they will deploy.  We formalize this setting below by defining the strategy space and the utility functions.

\paragraph{Algorithms as Strategies}
The sellers' strategy space is the space of all algorithms, mapping histories to price distributions at the following round in arbitrary ways.

\begin{definition}
    An algorithm $\cA$ for the $T$ round repeated game is  
    a mapping from the history of play (without loss of generality of only the other seller) to the next round's action, denoted by a  collection of $T$ functions $A_1, A_2 \cdots A_T$, each of which deterministically map the transcript of play (up to the corresponding round) to a mixed strategy to be used in the next round, i.e., $A_{1,t}(p_{2,1}, p_{2,2},\cdots, p_{2,t-1}) = p_{1,t}$ for the algorithm $\cA_1$ used by seller 1.
\end{definition}

We define a restricted subclass of algorithms below---algorithms that do not use nor respond to the history of play. Such algorithms, in particular,  cannot make use of or respond to threats as they are oblivious to the behavior of their opponent.
\begin{definition}
    If $A_t$ is independent of its input for all $t$, we call it \emph{non-responsive}. If in addition $\cA_t$ outputs the \emph{same} distribution for all $t$ we call it static.
\end{definition}
 

If the two sellers pick algorithms $\cA_1$ and $\cA_2$ respectively, this induces a transcript of $T$ price distribution pairs $((p_{1,1},p_{2,1}), \cdots (p_{1,t},p_{2,t}) \cdots (p_{1,T},p_{2,T}))$
 where $A_{1,t}(p_{2,1}, p_{2,2},\cdots, p_{2,t-1}) = p_{1,t}$ and, similarly, $A_{2,t}(p_{1,1}, p_{1,2},\cdots, p_{1,t-1}) = p_{2,t}$ for each round $t$.

\paragraph{Game definition} We again consider two sellers, who we will now refer to as the \emph{learner} and the \emph{optimizer}. The action space of each seller is now the set of possible \emph{algorithms} mapping the history of play to a pricing distribution at the current round.

\begin{definition}[Seller Payoff in the Repeated Game]\label{def:rep_utilities}
Let $p_{l}^{1:T}$ and $p_{o}^{1:T}$ represent the sequence of $T$ distributions over prices by the learner and the optimizer, respectively, given $\cA_{l}$ and $\cA_{o}$. Then, the payoff of seller $i \in \{l,o\}$ is
\begin{equation}
U_{i}(\cA_{l},\cA_{o})= \sum_{t=1}^{T} u_i(p_{l}^{t},p_{o}^{t})
\end{equation}
\end{definition}

We can now formally define the central game studied in this paper:
\begin{definition}[Repeated Pricing Game]
    A two-player repeated pricing game $G(k,C,T)$ is defined as a combination of: 
    \begin{enumerate}
    \item Two players, named the ``learner'' and the ``optimizer'';
    \item A time horizon $T$ specifying the length of the repeated game;
    \item An action space $A = \mathcal{A}^{T}$ corresponding to possible sequential pricing algorithms for games of length $T$; 
    \item Payoff functions $U_l(\cA_{l},\cA_{o})$ for the learner and $U_o(\cA_{l},\cA_{o})$ for the optimizer, defined as in Definition~\ref{def:rep_utilities}.
    \end{enumerate}
    A particular repeated pricing game is defined by the allocation rule $C$ and the discretization parameter $k$. 
\end{definition}


\paragraph{Equilibrium Notions in Algorithm Space}
Once again, we can study two kinds of equilibria of the repeated game in ``algorithm space'', depending on the timing of the game: Nash equilibria if play is simultaneous, and Stackelberg equilibria if the learner has commitment power and moves first. 
\begin{definition}[Nash Equilibrium in Algorithm Space]
A pair $(\cA_l,\cA_o)$ of algorithms for the two sellers is said to be an $\varepsilon$-Nash equilibrium in ``algorithm space''  if $U_l(\cA_l,\cA_o) \ge \max_{\cA \in \cA^T} U_l(\cA,\cA_o) - \varepsilon$ and $U_o(\cA_l,\cA_o) \ge \max_{\cA \in \cA^T} U_o(\cA_l,\cA) - \varepsilon$. When $\varepsilon = 0$, we call this pair a Nash equilibrium in algorithm space.
\end{definition}


\begin{definition}[Stackelberg Equilibria in Algorithm Space]
A pair $(\cA_l,\cA_o)$ of algorithms for the two sellers is said to be an $\varepsilon$-approximate Stackelberg equilibrium in ``algorithm space'' if 
\[
U_o(\cA_l,\cA_o) \ge \max_{\cA \in \cA^T} U_o(\cA_{l},\cA),
\]
and 
\begin{align*}
U_{l}(\cA_{l},\cA_{o}) \geq 
\max_{(\cB_{l}, \cB_{o} )\in \A^{T}} &U_{l}(\cB_{l},\cB_{o}) - \varepsilon\\
\text{s.t.}~~&U_o(\cB_l,\cB_o) \ge \max_{\cB \in \cA^T} U_o(\cB_{l},\cB)
\end{align*}
When $\varepsilon = 0$, we call this pair a Stackelberg equilibrium in algorithm space.
\end{definition}

We investigate a setup in which one seller, who we call the Learner, commits to an algorithm $\A_{L}$, and then the other seller, who we call the Optimizer, responds with their own algorithm $\A_{O}$~\footnote{We follow the convention in prior work in renaming the leader and follower as the learner and optimizer where one of the key cases considered is the leader playing a no-(swap-)regret algorithm. This is helpful in clarifying the roles of the players, as we will use the fact that the optimizer, playing against a no-swap-regret algorithm, can get their Stackelberg \emph{leader} value in the stage game}. This is reminiscent of a Stackelberg game in algorithm space; however, in contrast to standard Stackelberg games, we do not necessarily assume that the Optimizer exactly best-responds to the Learner's algorithm, or that the Learner necessarily deploys the algorithm representing the optimal commitment strategy. Instead, we examine which pricing outcomes arise under various Optimizers of different capacities, and show that, when the Optimizer is responding to a no-regret algorithm (defined in Section \ref{sec:regret}), a robust set of responses, including exactly optimal but also including approximately optimal non-responsive strategies, lead to supra-competitive prices.

\paragraph{Competitive vs supra-competitive prices} In the economics literature, competitive prices are defined as prices that can emerge from the static Nash equilibria of the pricing game (i.e. Nash equilibria of the stage game, where prices, rather than algorithms, form the strategy space). We formalize this below:

\begin{definition}[Competitive and Supra-competitive prices] For some pricing model $G$, let $p$ be the highest price that is in the support of any  Nash equilibrium of the stage pricing game. Then, $p'$ is a competitive price if $p' \leq p$. Furthermore, $p'$ is a supra-competitive price if $p' = \Omega_k(1)~\footnote{The $\Omega_k(1)$ notation means that this quantity remains a constant between $0$ and $1$ regardless of the growth of $k$, in contrast to an expression such as $\frac{1}{k}$.}$.  
\end{definition}

Supra-competitive prices are sometimes defined as any prices that are strictly above the competitive price. In this work, we use and guarantee a stronger definition, and require supra-competitive prices to be a constant fraction of the maximum possible price ($1$), even when the competitive price tends to $0$ as $k\rightarrow \infty$. Thus our definition of supra-competitive prices really refers to ``near monopoly'' prices. 

Throughout the paper we will regularly use high seller payoff as proof of supra-competitive prices. We make that connection explicit here, showing that in both the Bertrand and Logit models, the average price over time is exactly equal to the sum of the average utilities of the two sellers:

\begin{definition}[Average Buyer Price in the Repeated pricing Game]
For an allocation rule $C$ and seller algorithms $\cA_l$, $\cA_o$, the average buyer price over the rounds is simply the time average of the buyer prices over each round induced by the seller's prices and the allocation rule in the resulting transcript of play (see Definition~\ref{def:bprice}). 
\end{definition}

We make a simple observation connecting the average payoffs of the sellers with the average prices over the rounds.
\begin{observation}
\label{obs:value_transfer}
    The average buyer price in a repeated pricing game exactly equals the sum of the average payoffs of the sellers.
\end{observation}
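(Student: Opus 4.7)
The plan is to derive the claim directly from the stage-game identity already recorded in Definition~\ref{def:bprice}, then lift it to the repeated setting by linearity. Concretely, Definition~\ref{def:bprice} defines the average buyer price in the stage game as $\mathbb{E}[p_1 C_1(p_1,p_2) + p_2 C_2(p_1,p_2)]$, and by Definition~\ref{def:utilities} this quantity is exactly $u_1(p_1,p_2) + u_2(p_1,p_2)$. So in any single round, the buyer price contributed that round equals the sum of the two sellers' payoffs for that round.

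Next I would apply this round-by-round. Given algorithms $\cA_l, \cA_o$ and the induced sequences $p_l^{1:T}, p_o^{1:T}$, the per-round buyer price at round $t$ equals $u_l(p_l^t, p_o^t) + u_o(p_l^t, p_o^t)$. The average buyer price over the rounds, as specified in the repeated-game definition, is the time average
\[
\frac{1}{T}\sum_{t=1}^T \bigl(u_l(p_l^t, p_o^t) + u_o(p_l^t, p_o^t)\bigr).
\]
By linearity this splits as $\tfrac{1}{T}\sum_t u_l(p_l^t,p_o^t) + \tfrac{1}{T}\sum_t u_o(p_l^t,p_o^t)$, which, by Definition~\ref{def:rep_utilities}, is exactly $\tfrac{1}{T}U_l(\cA_l,\cA_o) + \tfrac{1}{T}U_o(\cA_l,\cA_o)$, i.e.\ the sum of the two sellers' average payoffs.

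There is no real obstacle: the observation is essentially a bookkeeping consequence of the fact that total demand is normalized to one, so whatever the allocation rule assigns to one seller at price $p_i$ is charged to the buyer, and the overall buyer expenditure is a pure transfer split between the two sellers. The only thing worth flagging in the write-up is that this identity does not depend on whether the per-round strategies are pure or mixed, and does not depend on the particular choice of allocation rule $C$ beyond the fact that $C_1, C_2$ describe how the unit demand is divided; in particular it applies uniformly to the Bertrand and logit instantiations considered later.
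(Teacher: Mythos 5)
Your argument is correct. The paper does not actually include an explicit proof of this observation in its final text (earlier drafts, visible in a commented-out block, compute the buyer price directly from the indicator / logit allocation formulas, model by model), but the paper effectively builds the stage-game version of the identity into Definition~\ref{def:bprice} itself, which already remarks that the average buyer price equals the expected sum of seller payoffs. Your proposal makes that stage-game observation explicit and then lifts it to the repeated game by linearity of the time average and Definition~\ref{def:rep_utilities}, which is exactly the intended logic and is actually cleaner and more model-agnostic than the commented-out per-model computations, since it works for any allocation rule $C$ with $C_1+C_2$ describing a split of the normalized unit demand. No gaps.
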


\section{Online Learning Preliminaries}
\label{sec:regret}


\paragraph{Classes of Algorithms} We are interested in three specific classes of  algorithms that the learner might commit to: \emph{no-regret algorithms}, \emph{no-swap-regret algorithms}, and \emph{mean-based algorithms}. The learner and optimizer repeatedly play a single-stage game $G$ for $T$ rounds. We refer to the learner's payoff function (in the stage game) as $u_L$ and the actions chosen by the learner and optimizer in round $t$ as $x_t \in \Delta^n$ and $y_t \in \Delta^m$ respectively (the action spaces are distributions over finite sets of actions).
While we describe these algorithms from the perspective of the learner, they are also valid algorithms for the optimizer and afford them the same properties against algorithms picked by the learner.

\paragraph{No-regret definitions} We consider a setting in which a learner faces an adversary, here called an ``optimizer'' that can choose an arbitrary sequence of actions, with knowledge of the learner's algorithm. The learner obtains a payoff $u_L(x_t,y_t)$ as a function of his own action $x_t$ and the optimizer's action $y_t$. A desirable property is that in hindsight, after the sequence of learner and optimizer actions have been realized, the learner does not \emph{regret} not having played a simple strategy (like consistently playing whichever turned out to be the best fixed action in hindsight). We say that an algorithm $\cA$ has \emph{regret} $r(T)$ on some sequence of length $T$ if, regardless of the sequence of optimizer actions, the learner is guaranteed that in hindsight, they obtained payoff at least as high as they could have with the best fixed action, minus $r(T)$. 

\begin{definition}[$r(T)$-Regret Algorithm] 
An algorithm has worst-case regret $r(T)$ if, for any sequence of actions $(y_1, y_2, \dots, y_T)$ taken by the optimizer, the total  payoff of the learner can be lower bounded by
\begin{equation*}
\sum_{t=1}^{T} u_{L}(x_t, y_t) \geq \left(\max_{x^* \in [n]} \sum_{t=1}^{T} u_{L}(x^*, y_t)\right) - r(T).
\end{equation*}
An algorithm is a \textbf{no-regret algorithm} if it is an $r(T)$-regret algorithm with $r(T) = o(T)$.
\end{definition}


We also consider a strict subset of the class of no-regret algorithms, the class of \emph{no-swap regret} algorithms. An algorithm $\cA$ is a no-swap-regret algorithm if it has the no regret property not just marginally, but \emph{conditionally} on each action it played. This can be formalized by requiring that the learner's cumulative  payoff is at least as high as it would have been had they, retrospectively, been able to apply some \emph{swap function} $\pi:[n]\rightarrow [n]$ to their sequence of actions in hindsight. 

\begin{definition}[$r(T)$-Swap Regret Algorithm] 
An algorithm has worst-case swap regret $r(T)$ if, for any sequence of actions $(y_1, y_2, \dots, y_T)$ taken by the optimizer, the total  payoff of the learner can be lower bounded by
\begin{equation*}
\sum_{t=1}^{T} u_{L}(x_t, y_t) \geq \max_{\pi: [n] \rightarrow [n]} \sum_{t=1}^{T} u_{L}(\pi(x_t), y_t) - r(T).
\end{equation*}
where $\pi(x_t)$ refers to the linear extension of $\pi$ acting on the support of $x_t$.

An algorithm is a \textbf{no-swap regret algorithm} if it is an $r(T)$-swap regret algorithm with $r(T) = o(T)$.
\end{definition}

\noindent
Here the maximum is over all swap functions $\pi: [n] \rightarrow [n]$ (extended linearly to act on elements $y_t$ of $\Delta_n$). It is a fundamental result in the theory of online learning that both no-swap-regret algorithms and no-regret algorithms exist (see \cite{cesa2006prediction}).

Some no-regret algorithms have the property that at each round, they approximately best-respond to the historical sequence of losses interpreted as a mixed strategy. Following \cite{braverman2018selling} and \cite{deng2019strategizing}, we call such algorithms \emph{mean-based algorithms}. Formally, we define mean-based algorithms as follows.

\begin{definition}[Mean-Based Algorithm]\label{def:mean-based}
An algorithm $\cA$ is \emph{$\gamma(t)$-mean-based} if whenever $j, j' \in [m]$ satisfy

$$\frac{1}{t}\sum_{s=1}^{t} u_L( j',y_s) - 
\frac{1}{t}\sum_{s=1}^{t} u_L( j,x_s) \geq \gamma(t),$$

\noindent
then $x_{t, j} \leq \gamma(t)$ (i.e., if $j$ is at least $\gamma(t)$ worse than some other action $j'$ against the historical average action of the opponent, then the total probability weight on $j$ must be at most $\gamma(t)$). A learning algorithm is \emph{mean-based} if it is $\gamma(t)$-mean-based for some $\gamma(t) = o(1)$.
\end{definition}

Many standard no-regret learning algorithms are mean-based, including Multiplicative Weights, Hedge, Online Gradient Descent, and others (see \cite{braverman2018selling}).





Up to low order terms, best responses to no-swap regret algorithms in algorithm space are well understood: the optimizer can do no better than to play a static non-responsive strategy which at every round plays an identical action distribution that is very close (in any reasonable norm) to their optimal Stackelberg leader strategy in the stage game:

\begin{lemma}[\cite{deng2019strategizing}]
  \label{lemma:nsr_bestresponse}
  If the learner plays a no-swap-regret algorithm, then there exists $\varepsilon = o_T(1)$ and a $o(T)$-additive best-response of the optimizer that involves playing a distribution $D'$ in each round such that $||D-D'||_{\infty} \le \varepsilon$ where $D$ is the optimizer's leader Stackelberg strategy of the underlying stage game.
\end{lemma}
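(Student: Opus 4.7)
I would prove the lemma via two matching bounds: first exhibit a non-responsive strategy that extracts (essentially) the stage-game Stackelberg leader value for the optimizer, and then show no strategy can do better against a no-swap-regret learner. Let $D \in \Delta^m$ be the optimizer's Stackelberg leader strategy in the stage game and let $a^\star \in \BR(D)$ be the learner's best response that is most favorable to the optimizer, so that $u_O(D, a^\star)$ equals the Stackelberg leader value $V^\star$.

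For the lower bound, the plan is to have the optimizer commit to a slight perturbation $D'$ of $D$ at every round. The perturbation is chosen so that (i) $\|D-D'\|_\infty \le \varepsilon$ for any desired small $\varepsilon$, and (ii) $a^\star$ becomes the \emph{strict} best response of the learner to $D'$, with some margin $\delta = \delta(\varepsilon) > 0$. Concretely, one mixes $D$ with an infinitesimal amount of mass in a direction that breaks all ties in favor of $a^\star$ on the learner's side. Now since the optimizer plays $D'$ identically in every round, the empirical distribution of the optimizer's play is exactly $D'$, and the no-swap-regret condition applied with the swap $\pi$ that sends every action $a \ne a^\star$ to $a^\star$ yields
\begin{equation*}
\sum_{t=1}^{T} u_L(x_t, D') \ \ge\ \sum_{t=1}^{T} u_L(\pi(x_t), D') \;-\; r(T) \ =\ \sum_{t=1}^{T}\!\Big( u_L(x_t, D') + \delta\cdot (1 - x_{t,a^\star})\Big) - r(T),
\end{equation*}
so the total mass $\sum_t (1 - x_{t,a^\star}) \le r(T)/\delta = o(T)$. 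Hence the learner plays $a^\star$ in all but a vanishing fraction of rounds, and the optimizer's cumulative payoff is $T \cdot u_O(D', a^\star) - o(T) = T\cdot V^\star - O(\varepsilon T) - o(T)$. Taking $\varepsilon$ to decay slowly with $T$ (e.g., so that $\varepsilon T = o(T)$ while $\delta(\varepsilon)$ still beats $r(T)/T$) gives a $o(T)$-additive lower bound of $T\cdot V^\star$.

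For the matching upper bound, I would invoke the well-known correspondence between no-swap-regret sequences and correlated equilibria. For any optimizer algorithm, consider the empirical joint distribution $\hat\sigma$ over (learner, optimizer) action pairs. The learner's no-swap-regret guarantee says $\mathbb{E}_{(x,y)\sim\hat\sigma}[u_L(x,y)] \ge \mathbb{E}_{(x,y)\sim\hat\sigma}[u_L(\pi(x),y)] - r(T)/T$ for every swap $\pi$, so $\hat\sigma$ is an $o(1)$-approximate correlated equilibrium from the learner's side. Any such distribution can be simulated by the optimizer \emph{committing} to the marginal $\hat\sigma_O$ and letting the learner play a distribution over best responses; the optimizer's expected payoff under $\hat\sigma$ is therefore at most $\max_D u_O(D, \BR(D)) + o(1) = V^\star + o(1)$. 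Multiplying by $T$ gives the desired $o(T)$-additive upper bound, so the non-responsive strategy $D'$ is indeed an $o(T)$-approximate best response.

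\textbf{Main obstacle.} The delicate step is the perturbation in the lower bound: the Stackelberg leader value uses favorable tie-breaking, but a no-swap-regret learner need not break ties favorably. Getting the perturbation $D'$ of $D$ to simultaneously (a) lie within $\varepsilon$ of $D$ in $\|\cdot\|_\infty$, (b) create a strict best-response gap $\delta(\varepsilon)$ big enough that $r(T)/\delta(\varepsilon) = o(T)$, and (c) lose only $O(\varepsilon T)$ in the optimizer's stage payoff is the heart of the argument. Once this perturbation is set up, the no-swap-regret inequality does the rest mechanically, and the upper bound follows from the standard correlated-equilibrium characterization.
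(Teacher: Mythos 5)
This lemma is cited from \citet{deng2019strategizing}, and the paper under review does not prove it---so there is no internal proof to compare against, only the proposal to assess on its own merits. Your overall plan is essentially the argument from that reference: a matching lower bound (a static perturbation of the Stackelberg commitment extracts $\approx V^\star$ per round) and upper bound (no-swap-regret dynamics yield an approximate correlated equilibrium, under which the optimizer's value is at most the Stackelberg value). That decomposition is the right one, and your balancing of $\varepsilon$ against the regret rate (e.g.\ $\varepsilon \asymp \sqrt{r(T)/T}$ so that both $\varepsilon T$ and $r(T)/\delta(\varepsilon)$ are $o(T)$, assuming $\delta(\varepsilon) = \Theta(\varepsilon)$) is exactly how one absorbs both error sources into the $o(T)$ term.

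Two points worth tightening. First, a small one: in the displayed chain you write $u_L(\pi(x_t),D') = u_L(x_t,D') + \delta(1-x_{t,a^\star})$; this should be $\ge$, since $\delta$ is only a lower bound on the margin between $a^\star$ and each other action, but the downstream inequality is unchanged. Also note that the lower bound step you carry out only uses the swap $\pi \equiv a^\star$, i.e.\ only the external-regret consequence of no-swap-regret; the full swap-regret strength is needed only for the upper bound, which is fine but worth being aware of. For the upper bound, the phrase ``committing to the marginal $\hat\sigma_O$'' is not quite the right mechanism: the argument should condition on each realized learner action $x$ and observe that the conditional column distribution $\hat\sigma_O(\cdot \mid x)$ is an approximate commitment to which $x$ is an approximate best response, so each conditional slice contributes at most $V^\star + o(1)$ and the mixture does too. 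A single marginal does not give this.

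Second, the obstacle you flag is the genuine one, and I would push on it a bit harder. Your perturbation argument implicitly assumes that whenever $a^\star \in \BR(D)$, one can find $D'$ with $\|D-D'\|_\infty \le \varepsilon$ making $a^\star$ a \emph{strict} best response with margin $\Theta(\varepsilon)$. This fails if the learner has a distinct action $a'$ with $u_L(a',\cdot) \equiv u_L(a^\star,\cdot)$ (payoff-duplicate rows for the learner) but $u_O(\cdot,a') < u_O(\cdot,a^\star)$: no perturbation of the column player's mixture can separate them, and a no-swap-regret learner is free to always play $a'$, so the strong Stackelberg value with favorable tie-breaking is not achievable even approximately. \citet{deng2019strategizing} deal with this explicitly (in effect, their guarantee is against the Stackelberg value that survives such degeneracies, or requires a nondegeneracy condition); your write-up should either assume nondegeneracy, argue it holds in the Bertrand/logit stage games at hand, or match the precise form of the Stackelberg value used in the cited result. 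Without one of these, the lower bound has a gap in worst case.
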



\section{Results in the Bertrand Model}
\label{sec:bertrand}

\subsection{Equilibria of the Stage Game}

In this section, we focus on Nash and Stackleberg Equilibria of the stage game. We remind the reader that the Nash Equilibria of the stage game define which prices are competitive. In turn, this section identifies what the baseline competitive prices are in the Bertrand model (the equivalent result for the logit model can be found in Appendix~\ref{app:logit}).

We begin by showing that in any Nash equilibrium of the Bertrand stage game, the resulting competitive prices must be at most $\frac{2}{k}$, implying in particular that competitive prices yield in total a vanishing fraction of the monopoly revenue. To prove this, we need to show that pricing high relative to the other seller's price distribution is a dominated strategy, in a robust sense. Intuitively, it is clear that the best-response to a deterministic price is to undercut your opponent. We show that the same idea works against distributions in the following lemma (proof in appendix~\ref{app:marginal_domination}) that we use throughout this section.

\begin{lemma}
\label{lemma:marginal_domination}
Consider one seller (say seller 2) in the Bertrand stage game playing a distribution $d$ where the total weight placed on prices above $x$ (where $x \ge \frac{3}{k}$) is equal to $b$, for $b \leq \frac{1}{24 k }$. Then, for the other seller (seller 1), there is some price $x' < x$ such that $u_1(x',d) \ge u_1(x,d) + \frac{1}{24k^2}.$
\end{lemma}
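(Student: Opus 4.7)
The proof will proceed by case analysis on $c := \Pr_{p \sim d}[p = x]$, the mass that $d$ places exactly at $x$. The intuition is that seller 1's payoff at price $x$ is $u_1(x,d) = xb + (x/2)c$, which is necessarily small because $b \le 1/(24k)$; so beating it only requires either (i) capturing the probability mass sitting at $x$ by undercutting a single grid step, or (ii) abandoning the high-price regime entirely and cashing in on the fact that almost all of $d$'s mass lies strictly below $x$. I will use a threshold $c = 1/(6k)$ to decide between the two responses.

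In the first case, where $c \ge 1/(6k)$, I will set $x' = x - 1/k$, which is a valid price because $x \ge 3/k$. Since prices lie on a $1/k$-grid, $\Pr_{p \sim d}[p > x - 1/k] = \Pr_{p \sim d}[p \ge x] = b + c$, so dropping the tie contribution at $x-1/k$ yields $u_1(x', d) \ge (x - 1/k)(b+c)$. Expanding the difference against $u_1(x,d) = xb + (x/2)c$ leaves $c(x/2 - 1/k) - b/k$. Plugging in $x \ge 3/k$ makes the first term at least $c/(2k)$; plugging in $c \ge 1/(6k)$ and $b \le 1/(24k)$ then gives a gap of at least $1/(12k^2) - 1/(24k^2) = 1/(24k^2)$.

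In the second case, where $c < 1/(6k)$, the mass at $x$ is too small for one-step undercutting to help, but now there is very little mass at or above $x$ at all. Here I will set $x' = 1/k$ and use the elementary identity $u_1(1/k, d) = (1/k)\bigl(1 - \tfrac{1}{2}\Pr_{p \sim d}[p = 1/k]\bigr) \ge 1/(2k)$. On the other hand, $u_1(x,d) = xb + (x/2)c \le b + c/2 < 1/(24k) + 1/(12k) = 1/(8k)$, so the gap is at least $3/(8k)$, which exceeds $1/(24k^2)$ for all $k \ge 1$ (in fact for all $k \ge 1/9$).

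The main obstacle is purely bookkeeping: picking the threshold on $c$ so that both branches of the case split close at the advertised bound $1/(24k^2)$. The value $1/(6k)$ is the largest threshold that keeps Case 1 working, and once that is fixed, Case 2 goes through with a lot of slack because $b + c/2$ remains comfortably below the guaranteed $1/(2k)$ of playing the minimum price. One might initially worry that seller 1 needs to consider undercutting \emph{each} price in the support of $d$ to beat $u_1(x,d)$, but the argument shows that the two extremal responses $x - 1/k$ and $1/k$ are already enough, with the hypothesis $b \le 1/(24k)$ doing exactly the work of ruling out the possibility that a small but heavy tail above $x$ props up $u_1(x,d)$.
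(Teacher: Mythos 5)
Your proof is correct and takes essentially the same route as the paper's: the identical case split on $\Pr_{p\sim d}[p = x]$ at threshold $1/(6k)$, with the undercut $x' = x - 1/k$ in the heavy-atom case and the floor price $x' = 1/k$ in the light-atom case, and the same use of $x \ge 3/k$ and $b \le 1/(24k)$ to close the bound (your Case 2 arithmetic is marginally tighter, $3/(8k)$ versus the paper's $1/(3k)$, but this makes no difference to the conclusion).
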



The above lemma has a simple corollary which will also be useful:

\begin{corollary} In a Bertrand stage game in which one seller's distribution is $d$, pricing at $p \geq max(supp(d))$ is a strictly dominated strategy when $\max(supp(d)) \geq \frac{3}{k}$. \label{corr:max_dominated}
\end{corollary}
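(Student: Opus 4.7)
The plan is to invoke Lemma~\ref{lemma:marginal_domination} directly, with its parameter $x$ chosen equal to the candidate dominated price $p$. Write $M := \max(\mathrm{supp}(d))$. For any $p \ge M$, my goal is to exhibit some $x' \in \cP$ with $u_1(x',d) > u_1(p,d)$, which is precisely strict dominance.

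The key observation that makes a single application of the lemma suffice is that, since $M$ is the maximum of the support of $d$, the total mass $b$ that $d$ places on prices strictly greater than $p$ is identically $0$ whenever $p \ge M$. Hence the lemma's hypothesis $b \le 1/(24k)$ is trivially satisfied. The other hypothesis of the lemma, $x \ge 3/k$, follows directly from the standing assumption $p \ge M \ge 3/k$.

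Feeding $x = p$ into Lemma~\ref{lemma:marginal_domination} then produces a price $x' < p$ with
\[
u_1(x',d) \;\ge\; u_1(p,d) + \frac{1}{24 k^2},
\]
which is a \emph{strict} improvement over $u_1(p,d)$. This single application simultaneously handles both the boundary case $p = M$ and the strictly-above-support case $p > M$, so no case split is required.

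I do not anticipate any substantive obstacle: all the real work has already been done inside Lemma~\ref{lemma:marginal_domination}, and the corollary is essentially its $b = 0$ specialization. The only point worth flagging is that the lemma supplies a quantitative gap of $1/(24 k^2)$ rather than a non-strict inequality, and this gap is exactly what converts a weak comparison into the \emph{strict} dominance demanded by the corollary's conclusion.
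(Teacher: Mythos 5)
Your proof is correct and matches what the paper clearly intends: the corollary is stated as a ``simple corollary'' of Lemma~\ref{lemma:marginal_domination} with no explicit proof, and the obvious instantiation with $x = p$ and $b = \Pr[d > p] = 0$ is exactly the argument. Your checks of the two hypotheses ($p \ge \max(\mathrm{supp}(d)) \ge 3/k$, and $b = 0 \le 1/(24k)$) are the right ones, and the strict gap of $1/(24k^2)$ delivers strict dominance as required.
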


We are now ready to show what constitutes competitive prices in the Bertrand game.

\begin{lemma}
\label{lemma:unique_ne}
    All Nash Equilibria of the Bertrand stage game are supported on prices that are at most $\frac{2}{k}$. 
\end{lemma}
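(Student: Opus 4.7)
The plan is to argue by contradiction. Suppose some Nash equilibrium $(d_1,d_2)$ of the Bertrand stage game places positive mass on a price strictly greater than $2/k$, and let $M_i := \max(\mathrm{supp}(d_i))$. By the symmetry of the stage game I can relabel the sellers so that $M_1 \geq M_2$, and since some $M_i$ exceeds $2/k$ while prices are multiples of $1/k$, we have $M_1 \geq 3/k$. The goal is to show that, regardless of the value of $M_2$, the pure action $M_1$ is strictly dominated for seller $1$; since a strictly dominated pure action cannot lie in the support of any best response in a Nash equilibrium (shifting probability off the dominated action strictly improves the expected payoff), this will contradict $M_1 \in \mathrm{supp}(d_1)$ and complete the proof.

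In the case $M_2 \geq 3/k$, Corollary~\ref{corr:max_dominated} applied to seller $1$ against the distribution $d_2$ (whose support has maximum $M_2 \geq 3/k$) says directly that any price $p \geq M_2$ is strictly dominated. Since $M_1 \geq M_2$ this includes $p = M_1$, yielding the desired contradiction. This case is essentially a black-box application of the earlier corollary.

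In the remaining case $M_2 \leq 2/k$, seller $2$'s support lies entirely in $\{1/k, 2/k\}$, so the pure action $M_1 \geq 3/k$ strictly exceeds every point in $\mathrm{supp}(d_2)$; seller $1$ therefore never wins any demand and earns payoff exactly $0$ by pricing at $M_1$. On the other hand, pricing at $1/k$ guarantees at least a $1/2$ share of the demand against every realization of $d_2$, and hence yields payoff at least $1/(2k) > 0$. So $M_1$ is strictly dominated by $1/k$, a contradiction. The only real subtlety to flag is precisely this second case: because $M_2 < 3/k$ the hypothesis of Corollary~\ref{corr:max_dominated} fails, so one cannot invoke it as a black box and must instead give a short direct undercutting argument. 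With both cases ruled out, no Nash equilibrium can support a price greater than $2/k$, proving the lemma.
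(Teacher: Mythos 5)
Your proposal is correct but takes a genuinely different route from the paper. The paper proves the lemma by downward induction on the price grid: the base case uses Corollary~\ref{corr:max_dominated} to rule out price $1$, and the inductive step invokes Lemma~\ref{lemma:marginal_domination} to show that once no NE is supported at $\frac{m}{k}$ or higher (for $m\geq 4$), none is supported at $\frac{m-1}{k}$ either, terminating at $\frac{3}{k}$. You instead argue directly by contradiction on the maximum supported prices $M_1 \geq M_2$, dispatching the case $M_2 \geq \frac{3}{k}$ with a single black-box invocation of Corollary~\ref{corr:max_dominated}, and handling the residual case $M_2 \leq \frac{2}{k}$ (where the corollary's hypothesis fails) with an elementary undercutting argument: pricing at $M_1 \geq \frac{3}{k}$ earns exactly zero, while pricing at $\frac{1}{k}$ guarantees at least $\frac{1}{2k}$. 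This is shorter and avoids the induction bookkeeping, and you are right to flag that the second case needs a bespoke argument because Corollary~\ref{corr:max_dominated} requires $\max(\mathrm{supp}(d)) \geq \frac{3}{k}$. The paper's induction sidesteps that edge case by appealing to Lemma~\ref{lemma:marginal_domination} directly, whose hypothesis asks only that the mass placed strictly above $x=\frac{m-1}{k}$ be at most $\frac{1}{24k}$---a condition met with $b=0$ under the inductive hypothesis regardless of where $\mathrm{supp}(d_1)$ actually sits, even if entirely below $\frac{3}{k}$. Both proofs are sound; yours is the more self-contained and arguably more transparent of the two.
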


\begin{proof}
We will show this by induction:

\noindent\emph{Base Case:} No Nash Equilibrium (NE) of this game have support on price $1$. To see this, note that by Corollary~\ref{corr:max_dominated}, the price $1$ is dominated for both sellers, and thus cannot be in the support of any NE. 

\noindent\emph{Induction Hypothesis:}  If no NE has support on price $\frac{m}{k}$ or higher and $m$ is an integer s.t. $m \geq 4$, then no NE can have support on price $\frac{m-1}{k}$. To see this, consider any NE, $(d_{1},d_{2})$ such that neither distribution is supported on prices $\frac{m}{k}$ or higher. Since $max(supp(d_1)) \leq \frac{m-1}{k}$, where $\frac{m-1}{k} \geq \frac{3}{k}$, the best-response to it cannot be $\frac{m-1}{k}$ (owing to Lemma~\ref{app:marginal_domination}). 
Therefore $d_{2}$ cannot have support on price $\frac{m-1}{k}$. A similar argument holds for $d_1$, extending the induction hypothesis.

Therefore, no Nash Equilibria of the game have support on any prices above $\frac{2}{k}$. 
\end{proof}

We observe a gap between the equilibrium prices of the stage game between the Nash and the Stackelberg equilibrium. The following lemma relies upon results we will show in Section~\ref{sec:supra}, but we state it here for ease of organization.

\begin{lemma}
    The Stackelberg Equilibria of the stage game leads to supra-competitive prices
\end{lemma}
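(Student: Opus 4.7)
I would exhibit an explicit mixed commitment $D^\star$ for the Stackelberg leader whose value (when the follower best responds) is a constant fraction of the monopoly revenue of $1$. Since the leader's Stackelberg equilibrium payoff is, by definition, at least the value of any committable mixed strategy, this immediately lower-bounds the leader's equilibrium payoff by a constant. Combined with Observation~\ref{obs:value_transfer}---the average buyer price equals the sum of the two sellers' payoffs---the conclusion follows: the average buyer price in any Stackelberg equilibrium is at least the leader's equilibrium payoff, which is $\Omega_k(1)$.

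\textbf{The commitment.} I would take $D^\star$ to be the discrete analogue of the classical ``equal revenue'' distribution. In the continuous limit, consider $D$ on $[1/e, 1]$ with tail $\Pr_{q \sim D}[q \geq p] = 1/(ep)$, assigning the residual mass to $p=1$. Against $D$, any follower pure strategy $p^\star \in [1/e, 1]$ yields expected payoff $p^\star \cdot \Pr_{q \sim D}[q > p^\star] = 1/e$, while any $p^\star < 1/e$ yields payoff at most $p^\star < 1/e$ (the follower captures the whole market but at too low a price). So after a tiny perturbation of $D$ (or using the standard Stackelberg tie-breaking convention) the follower's best response is forced to $p^\star = 1$, and the leader's payoff is
\[
\mathbb{E}_{q \sim D}\bigl[q \cdot \mathbb{I}[q < 1]\bigr] = \int_{1/e}^{1} q \cdot \tfrac{1}{e q^2}\, dq = \tfrac{1}{e}.
\]
In the actual discrete setting $\cP = \{1/k,\ldots,1\}$, a natural discretization of $D$ on the grid preserves both the tail probabilities and the Riemann-sum approximation of the leader's payoff up to additive $O(1/k)$; this is almost certainly what Section~\ref{sec:supra} constructs, which is why the lemma relies on those results.

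\textbf{Conclusion and main obstacle.} Let $(p_1^\star, p_2^\star)$ be any Stackelberg equilibrium with the leader as seller~$1$. Since $p_1^\star$ is an optimal commitment, $u_1(p_1^\star,p_2^\star) \geq u_1(D^\star, \BR(D^\star)) \geq 1/e - O(1/k)$, and $u_2(p_1^\star,p_2^\star) \geq 0$. By Observation~\ref{obs:value_transfer} the average buyer price equals $u_1(p_1^\star,p_2^\star) + u_2(p_1^\star,p_2^\star) \geq 1/e - O(1/k) = \Omega_k(1)$, meeting the paper's definition of a supra-competitive price. The principal technical obstacle is purely bookkeeping: verifying that the equal-revenue tail property can be maintained to within $O(1/k)$ error on the discrete grid $\cP$, and that deviations off the support of $D^\star$ cannot help the follower by more than $o(1)$. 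In the continuous case these are textbook calculations, and the forward reference to Section~\ref{sec:supra} is presumably exactly what packages the discrete version cleanly, so that this lemma is essentially a one-line consequence of the explicit construction there and of Observation~\ref{obs:value_transfer}.
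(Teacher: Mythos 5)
Your approach is genuinely different from the paper's, and it would work, but it has a bug that needs fixing. The paper's proof does \emph{not} construct an explicit leader commitment. Instead it argues indirectly through the repeated-game machinery: Lemma~\ref{lemma:uniform_is_good_enough} (which uses the \emph{uniform} distribution, not the equal-revenue distribution you attribute to Section~\ref{sec:supra}) shows that an optimizer playing static uniform pricing gets average payoff $\Omega_k(1)$ against any no-(swap-)regret learner, while \cite{deng2019strategizing} bounds the optimizer's achievable average payoff against a no-swap-regret learner \emph{above} by the stage-game Stackelberg leader value; chaining these gives Stackelberg leader value $\geq \Omega_k(1)$, and Observation~\ref{obs:value_transfer} turns that into a price bound. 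Your route --- exhibit a single committable distribution and argue the leader's value against the follower's best response is $\Omega_k(1)$ --- is a cleaner, self-contained proof of the same lemma and gives a much better constant ($\approx 1/e$ vs.\ the roughly $6/625$ implicit in the paper's lemma), which is also consistent with the paper's numerical findings that each seller's Stackelberg payoff is $\approx (k-1)/(ek)$. The paper's route has the advantage of reusing a lemma it needs anyway.

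The bug: you claim the follower's best response (after a tiny perturbation or optimistic tie-breaking) is $p^\star = 1$. It cannot be. The leader's distribution $D$ has an atom of mass $1/e$ at $1$, so a follower pricing at $p^\star = 1$ earns only the tie split, $\tfrac{1}{2}\cdot 1 \cdot \Pr[q=1] = \tfrac{1}{2e}$, whereas any $p^\star \in (1/e, 1)$ earns the full $1/e$. No perturbation of an equal-revenue-style $D$ can make pricing at the very top of the support a strict best response for the follower in Bertrand, since the atom at $1$ is always shared. So the follower's best response is strictly below $1$; under the paper's optimistic Stackelberg definition the leader may take the best best-response, e.g.\ $p^\star = 1 - \tfrac{1}{k}$, giving the leader payoff
\[
\int_{1/e}^{1-\frac{1}{k}} \frac{1}{e\,q}\,dq \;=\; \frac{1}{e}\left(1 + \ln\left(1-\tfrac{1}{k}\right)\right) \;=\; \frac{1}{e} - O\left(\frac{1}{k}\right),
\]
which is exactly the $\Omega_k(1)$ bound you want. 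So the conclusion survives, but you should replace ``the follower's best response is forced to $p^\star = 1$'' with ``the optimistic-tiebreak best response is $p^\star = 1 - \tfrac{1}{k}$'' and adjust the displayed integral's upper limit accordingly. You should also drop the presumption that Section~\ref{sec:supra} contains this construction; it uses the uniform distribution as the static optimizer strategy precisely because it is easy to analyze, not the equal-revenue distribution.
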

\begin{proof}
    By Lemma~\ref{lemma:uniform_is_good_enough}, the optimizer can achieve utility $\Omega_{k}(T)$ against a no-swap regret learner (by playing a static strategy).~\cite{deng2019strategizing} establish that the maximum possible average payoff achievable against a no-swap regret learner is upper bounded by the stage game Stackelberg leader value. Therefore, the stage game Stackelberg leader value is $\geq \Omega_{k}(1)$, and thus the resulting price is $\geq \Omega_{k}(1)$ (see Observation~\ref{obs:value_transfer} which lower bounds the average price by the average payoff of either seller). 
\end{proof}

\subsection{Results for the Repeated Bertrand Game}

We begin by considering the Stackelberg equilibria of the repeated Bertrand game. As we observed earlier, allowing unrestricted commitment to any possible algorithm opens the door for explicit usage of threats to maintain collusive prices. The following lemma shows that supra-competitive prices emerge out of such commitments, and is unsurprising.

\begin{lemma}
    \label{se_not_the_same_as_ne}
    The Stackelberg Equilibrium of the repeated Bertrand game induces supra-competitive prices.
\end{lemma}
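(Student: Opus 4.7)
The plan is to exhibit an explicit leader algorithm that, together with the optimizer's best response, yields the leader a per-round payoff of $\Omega_k(1)$. Since a Stackelberg equilibrium maximizes the leader's value subject to the follower best-responding, this lower bound transfers to the leader's payoff at any Stackelberg equilibrium, and Observation~\ref{obs:value_transfer} then converts it into the desired bound on the average buyer price.

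First, fix a target high price such as $p^\star = 1/2$ and define a grim-trigger algorithm $\mathcal{A}_L$: play $p^\star$ at every round as long as the optimizer has placed all of their probability mass on prices $\ge p^\star$ in every prior round; the first time the optimizer places positive probability on a price strictly below $p^\star$, switch permanently to playing $\tfrac{1}{k}$. Since each seller observes the opponent's full mixed strategy (not merely its realization), $\mathcal{A}_L$ is a well-defined deterministic map from observed history to next-round action.

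Next, analyze the optimizer's best response to $\mathcal{A}_L$. Compliance (play $p^\star$ at every round) gives the optimizer $T p^\star / 2$ by demand splitting. Any pure deviation at round $t^\star$ yields at most $p^\star$ in round $t^\star$ and, for each subsequent round, at most $\tfrac{1}{2k}$ (the best continuation against a leader stuck at price $\tfrac{1}{k}$ is to match at $\tfrac{1}{k}$ and split the demand); playing strictly above $p^\star$ in any compliance round yields $0$. A direct arithmetic comparison shows that for $k \ge 20$ and $T$ sufficiently large, compliance at every round strictly dominates deviation at any $t^\star \le T - 2$, so any best response agrees with compliance except possibly in the final two rounds, which cost the leader only $O(1)$. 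Robustness to \emph{mixed} deviations follows by convexity: any mixed deviation is a convex combination of pure deviations, each of which is dominated. Consequently the leader's payoff against the optimizer's best response is $T p^\star / 2 - O(1) = \Omega_k(T)$.

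By the definition of Stackelberg equilibrium in algorithm space, the leader's equilibrium payoff is at least the value attained by $\mathcal{A}_L$ against any optimizer best response, which is $\Omega_k(T)$. Observation~\ref{obs:value_transfer} then lower-bounds the average buyer price by the leader's average payoff, giving $\Omega_k(1)$, which is supra-competitive by definition. The main technical subtlety I expect is merely the mixed-strategy robustness of the trigger analysis; all other steps are textbook folk-theorem arithmetic, which is why the lemma is described as unsurprising.
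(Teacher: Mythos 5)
Your proof is correct and takes essentially the same approach as the paper: both exhibit an explicit threat-based leader commitment, argue that any follower best response must comply with the prescribed high-price play on nearly every round, and then transfer the leader's $\Omega_k(T)$ payoff to the average buyer price via Observation~\ref{obs:value_transfer}. The details of the trigger scheme differ, and each choice has a distinct advantage. The paper's leader temporally partitions the surplus: it prices at $1-\tfrac{1}{k}$ (capturing all demand) for the first $\bar t = T - \tfrac{T}{2(1-1/k)} - 1$ rounds while instructing the follower to price at $1$, then rewards compliance by pricing at $1$ for the remaining rounds so the follower may undercut and take the whole market. Because the ``carrot'' is delivered after the punishment phase ends, there is no last-round defection subtlety to handle. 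Your scheme instead splits demand at $p^\star = \tfrac12$ every round under a grim trigger. This is arguably simpler, but it forces you to deal with the fact that any best response defects in the final round; you handle this correctly by noting the cost to the leader is $O(1)$. (Two small quibbles: the defection horizon is actually only the final round, not two, for $k \ge 20$; and the appeal to ``convexity'' over mixed deviations is a bit of a mislabel --- the cleaner observation is that the leader observes the full mixed strategy, so any positive mass below $p^\star$ triggers punishment, making partial mixing strictly dominated by pure compliance or pure undercutting.) Both proofs land on the same conclusion and the same use of Observation~\ref{obs:value_transfer}.
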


The proof, which is based on the algorithm of~\cite{collina2023efficient} to find optimal algorithmic Stackelberg strategies in repeated games, is in Appendix~\ref{app:se_proof}. It is known that the resulting leader algorithm is ``obviously" anti-competitive, in that it encodes an explicit threat for the follower. However, we show that supra-competitive prices arise from a vast array of algorithms which are not facially anti-competitive. In particular, we consider the class of mean-based no-regret algorithms and the class of no-swap regret algorithms, which are well-behaved in the following sense: for both classes of algorithms, when both sellers use any algorithm in the class, the prices converge to competitive prices.

\subsubsection{Committing to No-Regret Algorithms Induces Supra-Competitive Prices} 

\label{sec:supra}
In this section, we will prove two key results pertaining to the repeated version of the game:

\begin{enumerate}
    \item While playing against any no-regret learner, the optimizer can always guarantee themselves at least $\Omega_{k}(T)$ net payoff, even when restricted to static non-responsive strategies (recall: strategies which are oblivious to their opponent's behavior and simply play the same pricing distribution every day). In fact, they attain $\Omega_{k}(T)$ net payoff even by playing a uniform distribution over prices each day.
    \item If an optimizer attains $\Omega_{k}(T)$ net payoff against a no-regret Learner, that no-regret Learner also obtains $\Omega_{k}(T)$ net payoff.
\end{enumerate}

Taken together, these two results imply that, if the optimizer, optimizing in the environment defined by the learner's no-regret algorithm, does at least as they could do using a simple static and non-responsive strategy, the induced prices are supra-competitive. Furthermore, \emph{both} the learner and the optimizer benefit from these high prices. Finally, we will show that, when the learner is playing a no-swap regret algorithm and the optimizer plays a static non-responsive best response, this is in fact an approximate \emph{Nash} Equilibrium in algorithm space (where the approximation is a sublinear additive factor). We emphasize that despite the fact that the optimizer happens to be playing a static, non-responsive algorithm, this is a best response for them in all of (unrestricted) algorithm space. This result in fact holds true in all games not just pricing games---but in the context of pricing games, it gives an example of a Nash equilibrium with supra-competitive prices between two algorithms which do not explicitly encode threats.

To achieve these results, we must reason about an optimizer's payoff when playing against a no-regret learner. Consider a no-regret algorithm playing against a seller who plays the same distribution in each round. This fixes the expected cumulative payoff of each fixed action for the no-regret learner; and thus, by the no-regret guarantee, the learner must play a best-response for all but a diminishing fraction of rounds. This, in turn, has implications for the follower's aggregate payoff. We capture this relationship in the following lemma. Note that this lemma applies to all bimatrix repeated games, not just pricing games.
\begin{lemma} 
\label{lemma:fixed_strategy_nr}
Consider any $T$-round repeated game defined by a stage game $G$ with non-negative payoffs. Against any learning algorithm with the no regret property, if the optimizer plays a static strategy $s$ each round, the optimizer's expected payoff is at least $u_{o}(BR^{*}_{\ell}(s),s) \cdot T - o(T)$, where $u_{o}$ is the optimizer's value in $G$ and $BR^{*}_{\ell}(s)$ is the element of the learner's best response to $s$ in $G$ that minimizes $u_{o}(BR^{*}_{\ell}(s),s)$. \label{lem:static}
\end{lemma}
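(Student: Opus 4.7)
The plan is to leverage the stationarity induced by the optimizer playing a fixed distribution $s$ every round: this makes the learner's environment oblivious, so the per-round expected payoff of any fixed action $a$ is the constant $u_\ell(a,s)$. By the no-regret property, the learner's cumulative expected payoff must track the best-fixed-action benchmark up to a sublinear term, i.e.\ $\sum_t \mathbb{E}_{a \sim x_t}[u_\ell(a,s)] \ge T \cdot u_\ell^* - r(T)$ where $u_\ell^* = \max_a u_\ell(a,s) = u_\ell(BR_\ell(s), s)$ and $r(T) = o(T)$. This is the only place I need the no-regret assumption.

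The next step is to convert this aggregate regret bound into a statement about the total probability mass the learner places on non-best-responses. Let $BR(s) = \{a : u_\ell(a,s) = u_\ell^*\}$ denote the set of learner best responses to $s$, and let $\Delta = \min_{a \notin BR(s)} (u_\ell^* - u_\ell(a,s)) > 0$; this minimum exists and is a game-dependent positive constant because the learner's action set is finite and the set $\{u_\ell(a,s) : a\}$ is a finite set of reals (the case $BR(s) = $ all actions is trivial since any outcome then gives payoff at least $u_o(BR^*_\ell(s),s)$ to the optimizer). Rearranging the no-regret inequality gives $\sum_t \sum_{a \notin BR(s)} x_{t,a} (u_\ell^* - u_\ell(a,s)) \le r(T)$, and using the uniform lower bound $\Delta$ on the gaps yields $W := \sum_t \sum_{a \notin BR(s)} x_{t,a} \le r(T)/\Delta = o(T)$.

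I can then directly lower bound the optimizer's cumulative payoff by splitting on whether the learner plays a best response:
\begin{align*}
\sum_t u_o(x_t, s)
&= \sum_t \sum_{a \in BR(s)} x_{t,a}\, u_o(a,s) \;+\; \sum_t \sum_{a \notin BR(s)} x_{t,a}\, u_o(a,s) \\
&\ge u_o(BR^*_\ell(s), s) \cdot (T - W) \;+\; 0,
\end{align*}
where the first inequality uses the definition of $BR^*_\ell(s)$ as the best response \emph{minimizing} $u_o$, and the second inequality uses non-negativity of payoffs to discard the second sum. Since $u_o$ is bounded by a game-dependent constant and $W = o(T)$, this rearranges to $u_o(BR^*_\ell(s),s) \cdot T - o(T)$, as desired.

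The one subtle step to highlight is the introduction of the gap $\Delta$: this relies on the game being finite and fixed (independent of $T$), which is why the resulting bound $W = o(T)$ has the same asymptotic order as the regret $r(T)$, just rescaled by a constant that depends on $G$ and $s$. Everything else is routine: stationarity trivially converts per-round optimality into a claim about the average distribution of learner plays, and the non-negativity hypothesis eliminates the only term that could otherwise pull the optimizer's payoff below the stated lower bound.
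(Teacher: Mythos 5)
Your proof is correct and follows essentially the same approach as the paper's: both identify the gap between the best and second-best learner payoffs against $s$ (your $\Delta$, the paper's $g$), use the regret bound to show the total learner mass on non-best-responses is $O(r(T)/\Delta) = o(T)$, and then invoke non-negativity of $u_o$ to lower-bound the optimizer's payoff by $u_o(BR^*_\ell(s),s)$ times the complementary mass. The only cosmetic differences are that you reason with the round-by-round sum rather than the time-averaged distribution $p^*$ and that you explicitly note the degenerate case $BR(s)=[n]$ (which the paper leaves implicit).
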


\begin{proof}

Let the total regret of the learning algorithm be $r(T)$, and let $BR_{\ell}(s)$ be the set of all best responses to $s$ from the learner. Furthermore, let $g = u_{\ell}(BR^{*}_{\ell(s)},s) - \max_{y \in [m] \backslash BR_{\ell(s)}}u_{\ell}(y,s)$. 

Let $p^{*}$ be the average probability distribution played by the learning algorithm against $s$. Since there is no correlation between the two seller's randomness, the learner's average payoff is

\begin{align*}
 u_{\ell}(p^{*},s) & \le \Pr[p^{*} = x \in BR_{\ell}(s)] \cdot u_{\ell}(BR_{\ell}(s),s) + \Pr[p^{*} = x \notin BR_{\ell}(s)] \cdot (u_{\ell}(BR_{\ell}(s),s) - g)
\end{align*}

As the learner's algorithm has average regret $\frac{r(T)}{T}$, we have that 
\[
\Pr[p^{*} = x \in BR_l(s)] \cdot u_{\ell}(BR_{\ell}(s),s) + \Pr[p^{*} = x \notin BR_{\ell}(s)] \cdot [u_{\ell}(BR_{\ell}(s),s) - g] \geq u_{\ell}(BR_{\ell}(s),s) - \frac{r(T)}{T}.
\]
Therefore, 
\[
\Pr[p^{*} = x \notin BR_{\ell}(s)] \cdot  g \leq  \frac{r(T)}{T},
\]
which implies
\[
\Pr[p^{*} = x \notin BR_{\ell}(s)]  \leq  \frac{r(T)}{Tg},
\]
and finally 
\[
\Pr[p^{*} = x \in BR_{\ell}(s)]   \geq 1 -  \frac{r(T)}{Tg}.
\]

This allows us to lower bound the payoff of the optimizer  (where the first inequality implicitly uses the fact that the both sellers always get non-negative payoff) --

\begin{align*}
  u_{o}(p^{*},s) & \geq \Pr[p^{*} = x \in BR_{\ell}(s)] \cdot u_{o}(BR_{\ell}^{*}(s),s) \\
  & \geq \left(1 - \frac{r(T)}{Tc}\right)u_{o}(BR_{\ell}^{*}(s),s) \\
& = u_{o}(BR_{\ell}^{*}(s),s) - \frac{r(T)}{T} \cdot \frac{u_{o}(BR_{\ell}^{*}(s),s)}{g} \\
& = u_{o}(BR_{\ell}^{*}(s),s) - \frac{o(T)}{T} \cdot \frac{u_{o}(BR_{\ell}^{*}(s),s)}{g} \\
& = u_{o}(BR_{\ell}^{*}(s),s) - \frac{o(T)}{T}.
\end{align*}
\end{proof}


We are now ready to show that if an optimizer's only goal is maximizing their own payoff, and they do even a passable job, they will get high payoff against a no-regret learner. In fact, they can do this even by performing the extremely simple static strategy of pricing uniformly randomly every day. Recall that by Observation~\ref{obs:value_transfer}, the average payoff of the optimizer lower bounds the average price. Therefore, as long as the optimizer is optimizing in the environment defined by the learner's no regret algorithm, there is a wide range of optimizer behaviors which induce supra-competitive prices. This includes all optimizers that perform better than static random pricing, ranging from approximate optimization over static responses to exact optimal dynamic responses. 

\begin{lemma} 
\label{lemma:uniform_is_good_enough}
The optimizer can get payoff at least $\Omega_{k}(T)$ against any no-regret learner in the Bertrand model by playing a static strategy of uniformly random pricing. Formally, for any no-regret algorithm $\cA^{noreg}$ and the static uniformly random algorithm $\cA^{r}$,  \begin{align}
    U_{o}(\cA_{l}^{noreg},\cA_{o}^{r}) \geq \Omega_{k}(T).
\end{align}\label{lem:random} 
\end{lemma}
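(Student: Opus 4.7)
The plan is to apply Lemma~\ref{lemma:fixed_strategy_nr} directly, with the optimizer's static strategy $s$ taken to be the uniform distribution $U$ over $\cP$. The Bertrand payoffs are non-negative (prices lie in $(0,1]$ and allocations in $[0,1]$), so the hypothesis of that lemma is met, and it immediately gives $U_o(\cA_\ell^{noreg}, \cA_o^r) \ge u_o(BR_\ell^\ast(U), U)\cdot T - o(T)$. The entire proof therefore reduces to lower-bounding the single stage-game quantity $u_o(BR_\ell^\ast(U), U)$ by a positive constant that is $\Omega_k(1)$.

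For the key calculation, I would write both stage-game utilities in closed form. For a pure learner price $j/k$ against the uniform optimizer, price-times-probability-of-winning with the Bertrand tie-breaking term yields
\[
u_\ell(j/k, U) \;=\; \frac{j\,(2k - 2j + 1)}{2k^2}, \qquad u_o(U, j/k) \;=\; \frac{j^2}{2k^2}.
\]
The learner's utility is a downward parabola in $j$ with unconstrained maximum at $j = (2k+1)/4$, so any integer maximizer $j^\ast$ (i.e.\ any pure element of $BR_\ell(U)$) must satisfy $(2k-1)/4 \le j^\ast \le (2k+3)/4$, giving at most two integers clustered near $k/2$.

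The main subtlety, which is where I expect to have to be careful, is the tie-breaking built into Lemma~\ref{lemma:fixed_strategy_nr}: $BR_\ell^\ast(U)$ is defined as the learner best response that \emph{minimizes} the optimizer's utility. Since $u_o(U, j/k) = j^2/(2k^2)$ is strictly increasing in $j$, the worst case is the \emph{smallest} integer in the best-response set, which by the bound above is at least $\lceil (2k-1)/4 \rceil \ge k/3$ for every $k \ge 2$. Substituting this crude lower bound gives
\[
u_o(U, BR_\ell^\ast(U)) \;\ge\; \frac{(k/3)^2}{2k^2} \;=\; \frac{1}{18},
\]
a positive constant independent of $k$. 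Combining this with Lemma~\ref{lemma:fixed_strategy_nr} yields $U_o(\cA_\ell^{noreg}, \cA_o^r) \ge T/18 - o(T) = \Omega_k(T)$, as required. A slightly sharper version of the argument (using $j^\ast \approx k/2$ rather than $j^\ast \ge k/3$) would improve the leading constant to something approaching $1/8$ as $k$ grows, which matches the intuition that the learner best-responds to uniform pricing by charging about $1/2$, after which the uniform optimizer wins roughly half the market at an average price of about $1/4$.
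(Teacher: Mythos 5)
Your proposal is correct and structurally identical to the paper's proof: both invoke Lemma~\ref{lemma:fixed_strategy_nr} with $s$ the uniform distribution, then lower bound the stage-game quantity $u_o(BR^\ast_\ell(r), r)$ by first lower bounding the learner's best-response price and then exploiting monotonicity of the optimizer's payoff in the learner's price. The difference is entirely in the middle step: the paper argues crudely that any best response must price at least $1/5$ (because pricing at $1/2$ yields payoff $\ge 1/5$ and payoff never exceeds price), then plugs in $p^\ast = \lfloor k/5\rfloor/k$ to obtain a constant of $6/625$; you instead derive the exact closed forms $u_\ell(j/k,U) = j(2k-2j+1)/(2k^2)$ and $u_o(U,j/k) = j^2/(2k^2)$, locate the best-response integer near $(2k+1)/4$, and correctly observe that the tie-breaking in $BR^\ast_\ell$ picks the smallest such integer (since $u_o$ increases in $j$), landing on $j^\ast \ge \lceil (2k-1)/4\rceil \ge k/3$ and the cleaner constant $1/18$. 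Both are valid; your version is tighter and more transparent, and it recovers the asymptotic $\approx 1/8$ constant the authors only mention informally.
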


\begin{proof}
Let $r$ be the uniformly random distribution over prices.
By Lemma~\ref{lem:static}, the optimizer's expected payoff is at least $u_{2}(BR^*_{\ell}(r),r) \cdot T - o(T)$. 

First, we will prove that any best-response of the learner $BR_{\ell}(r) \geq \frac{1}{5}$. To see this, note that by pricing at $\frac{1}{2}$ against $r$, the learner's payoff would be at least

\begin{align*}
 \Pr \left [r > \frac{1}{2} \right ] \cdot \frac{1}{2} 
  & = \Pr \left [r \geq \floor{\frac{1}{2} + \frac{1}{k}} \right ] \cdot \frac{1}{2}  
  \\ & =  \frac{1 - \floor{\frac{1}{2} + \frac{1}{k}}}{1 - \frac{1}{k}} \cdot \frac{1}{2} 
\\ & \geq  \left(1 - (\frac{1}{2} + \frac{1}{k})\right)\cdot \frac{1}{2} 
\\ & \geq  \frac{\frac{1}{2} - \frac{1}{k}}{2} 
\\ & \geq \frac{1}{4} - \frac{1}{2k}
\\ & \geq \frac{1}{4} - \frac{1}{20} = \frac{1}{5} 
\end{align*} 

As the learner can get payoff at least $\frac{1}{5}$, their best response price will always be at least $\frac{1}{5}$, as otherwise they would always get payoff strictly less than $\frac{1}{5}$.

We now invoke Lemma~\ref{lemma:fixed_strategy_nr} to lower bound the payoff of the optimizer. The worst learner best-response to the uniform distribution, from the perspective of the optimizer, must be bounded below by $\frac{1}{5}$ i.e. the optimizer must select a price that is at least $p^* := \frac{\floor{\frac{k}{5}}}{k}$~\footnote{By construction, $p^*$ is in the set of prices available to the seller}. Additionally, the optimizer's payoff is  monotone increasing in the learner's price. Thus the optimizer gets total payoff at least $T \cdot u_o\left(r,p^*\right) - o(T)$. Analyzing this gives us the desired result --

\begin{align*}
u_{o}(r, p^*) \cdot T - o(T) 
& \geq  \mathbb{E}[r | r < p^*] \cdot \Pr[r < p^*] \cdot T - o(T) 
\\ & \geq  \mathbb{E}\left[r | r \leq p^*-\frac{1}{k}\right] \cdot \Pr\left[r \leq p^*-\frac{1}{k}\right]\cdot T - o(T) 
\end{align*}

$\mathbb{E}[r | r <= p^*-\frac{1}{k}]$ is exactly $\frac{p^*}{2}$ (the average of $\frac{1}{k}$ and $p^* - \frac{1}{k}$). Substituting $p^*$ results in $\frac{p^*}{2} = \frac{\floor{\frac{k}{5}}}{10}$, this is lower bounded by $\frac{2}{25}$ (assuming $k >20$).  On the other hand, we can rewrite $\Pr\left[r \leq p^*-\frac{1}{k}\right]$ (by substituting $p^*$) as $\frac{\floor{\frac{k}{5}}-1}{k}$. This term is monotone increasing in $k$ and lower bounded by $\frac{3}{25}$ (assuming $k > 20$). Thus, the optimizer payoff is lower bounded by $\frac{6T}{625}  - o(T)$, completing the proof.

\end{proof}

Since by Observation~\ref{obs:value_transfer} the average payoff of the optimizer lower bounds the average price, we have already shown that any no-regret algorithm induces supra-competitive prices against any optimizer who performs better than random pricing. We will now go on to show that the learner deploying the no-regret algorithm benefits from these supra-competitive prices as well.  

\begin{theorem}
\label{thm:equal_looting}
   For any no-regret learner algorithm $\cA^{noreg}_{l}$, and for any optimizer algorithm $\cA_{o}$, if $$U_{o}(\cA^{noreg}_{l}, \cA_{o}) \geq \Omega_{k}(T)$$ then $$U_{l}(\cA^{noreg}_{l},\cA_{o}) \geq \Omega_{k}(T)$$
\end{theorem}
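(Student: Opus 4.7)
The plan is to exhibit a single fixed price $\tilde p \in \cP$ such that, had the learner played $\tilde p$ in every round against the realized optimizer distributions $p_{o,1},\ldots,p_{o,T}$, the learner's counterfactual cumulative payoff would already be $\Omega_k(T)$. The conclusion then follows directly from the no-regret guarantee $U_l(\cA^{noreg}_l,\cA_o) \geq \max_{p \in \cP} \sum_{t=1}^T u_l(p, p_{o,t}) - r(T)$, since $r(T) = o(T)$.

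First I would reduce the optimizer's payoff hypothesis to a statement about the empirical distribution $d_o := \frac{1}{T}\sum_{t=1}^T p_{o,t}$ over prices. Since $C_o(\cdot,\cdot) \leq 1$ in the Bertrand stage game, the optimizer's per-round expected payoff is bounded above by its own expected price, so $U_o \leq T \cdot \mathbb{E}_{p \sim d_o}[p]$; hence $\mathbb{E}_{p \sim d_o}[p] \geq c$ for some constant $c = \Omega_k(1)$. A one-line reverse-Markov argument then produces a threshold with constant mass above it: because $p \in [0,1]$, $c \leq \mathbb{E}[p] \leq (c/2)\Pr_{p \sim d_o}[p < c/2] + \Pr_{p \sim d_o}[p \geq c/2]$, which rearranges to $\Pr_{p \sim d_o}[p \geq c/2] \geq c/2$.

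Now I would pick $\tilde p$ to be the largest discrete price in $\cP$ strictly less than $c/2$; by $k \geq 20$ and the fact that $c$ is an absolute constant independent of $k$, this price satisfies $\tilde p \geq c/2 - 1/k = \Omega_k(1)$. The counterfactual payoff of playing $\tilde p$ every round against the realized optimizer sequence is then $T \cdot u_l(\tilde p, d_o) \geq T \cdot \tilde p \cdot \Pr_{p \sim d_o}[p > \tilde p] \geq T \cdot \tilde p \cdot \Pr_{p \sim d_o}[p \geq c/2] = \Omega_k(T)$, where the middle inequality uses $\tilde p < c/2$. Combining with the no-regret bound yields $U_l(\cA^{noreg}_l,\cA_o) \geq \Omega_k(T) - o(T) = \Omega_k(T)$, as desired.

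The main subtlety is the rounding step: one has to make sure the threshold $c/2$ rounds down to a valid discrete price in $\cP$ that is itself $\Omega_k(1)$, for otherwise the deviation collapses and its payoff guarantee degrades. This is exactly where the standing assumption $k \geq 20$ is used, together with the absolute constant lower bound on the optimizer's per-round expected price that the hypothesis provides. Everything else is routine: (i) upper-bound the optimizer's payoff by its own expected price to control $d_o$, and (ii) undercut with a single fixed price and invoke no-regret on this one benchmark action.
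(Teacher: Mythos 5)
Your proof is correct and takes essentially the same route as the paper: establish that the optimizer's realized price distribution places mass at least $c/2$ on prices at least $c/2$ (the paper does this by contradiction, you by a one-step reverse Markov bound---these are the same inequality), then have the learner deviate to a fixed price near $c/2$ and invoke the no-regret guarantee. The only cosmetic difference is that you strictly undercut with $\tilde p < c/2$ to sidestep the tie-splitting factor of $\tfrac{1}{2}$, which the paper absorbs into its constant.
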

\begin{proof}
Let $U_{o}(\cA^{noreg}_{l}, \cA_{o}) \geq c \cdot T$ for some constant $c$ between $0$ and $1$, independent of $k$. 

Note that, in order to achieve $c$ payoff on average, the optimizer must have priced at or above $\frac{c}{2}$ with average probability (over the rounds) at least $\frac{c}{2}$. Assume for contradiction that the optimizer does not do this. Then, we can upper bound their utility by assuming they capture demand in all rounds:  
\begin{align*}
U_{o}(\cA^{noreg}_{l}, \cA_{o}) & \leq 
\Pr\left[p_{o} \geq \frac{c}{2}\right] \cdot 1 + \left(1 - \Pr\left [p_{o} \geq \frac{c}{2}\right] \right) \cdot \left (\frac{c}{2} \right) \\ 
& < \frac{c}{2}  + \left (1-\frac{c}{2} \right) \cdot \frac{c}{2} \\
& = c - \frac{c^{2}}{4}
\end{align*}
This is strictly less than $c$, and thus by contradiction, the optimizer must have priced at or above $\frac{c}{2}$ with frequency at least $\frac{c}{2}$. 

Given this, one fixed action that the learner could have taken is to price at $\frac{c}{2}$. Then,

\begin{align*}
U_{l}(\cA^{noreg}_{l}, \cA_{o})
& \geq \frac{1}{2}\Pr\left[p_{o} \geq \frac{c}{2} \right] \cdot \left (\frac{c}{2} \right ) \\ &
 \geq \frac{1}{2} \cdot \frac{c}{2} \cdot \left (\frac{c}{2} \right) 
 = \frac{c^{2}}{8} 
\end{align*}

By the no-regret guarantee of the learner, therefore, we have that  $U_{l}(\cA^{noreg}_{l}, \cA_{o}) \geq \frac{c^{2}T}{8} - o(T) = \Omega_{k}(T)$.

\end{proof}

We can now combine these results, to show that, if the learner deploys any no-regret algorithm in the Bertrand model and the optimizer responds via any strategy which gets them payoff at least that of static random pricing, then prices are supra-competitive and both the learner and the optimizer get a constant fraction of the profits.

\begin{theorem} In a Bertrand repeated game, for any no-regret learner algorithm $\cA_{l}^{noreg}$ and any optimizer algorithm $\cA_{o}$ such that 
\begin{align}
    U_{o}(\cA_{l}^{noreg}, \cA_{o}) \geq U_{o}(\cA_{l}^{noreg}, \cA_{o}^{r}),
\end{align}
where $\cA_{o}^{r}$ is the static uniformly random algorithm, we have
\begin{align}
    U_{o}(\cA_{l}^{noreg}, \cA_{o}) = \Omega_{k}(T)
\end{align}
 
\begin{align}
    U_{l}(\cA_{l}^{noreg}, \cA_{o}) = \Omega_{k}(T)
\end{align}
Further, the average price is $\Omega_{k}(1)$.
\end{theorem}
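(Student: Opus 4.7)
The plan is to assemble this theorem directly from the three pieces established immediately before it, since each of the three conclusions corresponds to one of those pieces. The argument should be very short: no new technical machinery is required, only the careful chaining of prior results.

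First, I would establish the optimizer's lower bound. By Lemma~\ref{lemma:uniform_is_good_enough} (the ``uniform is good enough'' lemma), playing the static uniformly random pricing algorithm $\cA_{o}^{r}$ against any no-regret learner yields $U_{o}(\cA_{l}^{noreg}, \cA_{o}^{r}) \geq \Omega_{k}(T)$. The hypothesis of the theorem says precisely that the optimizer's chosen algorithm $\cA_{o}$ performs at least as well as $\cA_{o}^{r}$ against $\cA_{l}^{noreg}$, so chaining these inequalities gives $U_{o}(\cA_{l}^{noreg}, \cA_{o}) \geq U_{o}(\cA_{l}^{noreg}, \cA_{o}^{r}) \geq \Omega_{k}(T)$, which is the first conclusion.

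Next, I would invoke Theorem~\ref{thm:equal_looting} (``equal looting''). That theorem states that whenever an optimizer achieves $\Omega_{k}(T)$ payoff against a no-regret learner, the learner \emph{also} achieves $\Omega_{k}(T)$ payoff. Since we just established the hypothesis of this implication, we immediately get $U_{l}(\cA_{l}^{noreg}, \cA_{o}) \geq \Omega_{k}(T)$, which is the second conclusion.

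Finally, for the claim on the average buyer price, I would appeal to Observation~\ref{obs:value_transfer}, which states that the average buyer price over the $T$ rounds equals the sum of the average payoffs of the two sellers. Dividing the two payoff lower bounds by $T$ gives that each seller's average payoff is $\Omega_{k}(1)$, and therefore their sum --- the average price --- is also $\Omega_{k}(1)$. There is essentially no obstacle here; the entire content of the theorem is the packaging of Lemma~\ref{lemma:uniform_is_good_enough}, Theorem~\ref{thm:equal_looting}, and Observation~\ref{obs:value_transfer} into a single easily-citable statement, and the only thing to be careful about is making explicit that the constants hidden inside the $\Omega_{k}(\cdot)$ notation are independent of $k$, as each underlying result guarantees.
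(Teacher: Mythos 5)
Your proof is correct and follows exactly the same three-step chain as the paper: Lemma~\ref{lemma:uniform_is_good_enough} plus the hypothesis to bound the optimizer's payoff, Theorem~\ref{thm:equal_looting} to transfer the bound to the learner, and Observation~\ref{obs:value_transfer} to convert payoff bounds into a price bound. No differences to note.
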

\begin{proof}
    By Lemma~\ref{lem:random}, the optimizer gets payoff at least $\Omega_{k}(1)\cdot T$ against any no-regret algorithm by playing static random pricing. Therefore, the optimizer's payoff utilizing any strategy which is better than static random pricing also gives them a payoff of at least $\Omega_{k}(1)\cdot T$.
    Furthermore, by Theorem~\ref{thm:equal_looting}, this implies that the learner also gets a payoff of at least $\Omega_{k}(1)\cdot T$. Finally, by Observation~\ref{obs:value_transfer}, this implies that the average price is $\Omega_{k}(1)$.
\end{proof}

We show an equivalent result for the more general Multinomial-logit model in Theorem~\ref{thm:logit_generalized}.


Note that despite our treatment of the problem in a sequential play setting, neither sequential play nor commitment power is necessary for our results. In fact, in the game where leader and optimizer pick their algorithm simultaneously, if one player plays any no-swap regret algorithm and their opponent plays a static strategy corresponding to the Stackelberg leader distribution of the stage game, then this forms a Nash equilibrium in algorithm space:

\begin{theorem} There exists an $o(T)$-approximate Nash Equilibrium in algorithm space in any repeated game consisting of a no-swap regret algorithm for the leader and a static, non-responsive algorithm for the follower.
\end{theorem}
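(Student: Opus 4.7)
The plan is to set the follower's static non-responsive algorithm $\cA_o^s$ to be the repeated play, in every round, of a distribution $D$ equal to (a possibly infinitesimally tie-breaking perturbation of) the optimizer-as-leader Stackelberg distribution of the stage game $G$. I will then verify the two one-sided Nash conditions separately: (i) the follower cannot gain more than $o(T)$ by deviating from $\cA_o^s$ against $\cA_l$, and (ii) the leader cannot gain more than $o(T)$ by deviating from the fixed no-swap-regret algorithm $\cA_l$ against $\cA_o^s$.

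For the follower's side, I would invoke Lemma~\ref{lemma:nsr_bestresponse} (the characterization from \cite{deng2019strategizing}): against any no-swap-regret learner, no algorithm in algorithm space can achieve more than $T$ times the stage-game Stackelberg leader value, plus an $o(T)$ slack. It then suffices to show that playing $D$ statically achieves this Stackelberg leader value up to $o(T)$. This follows directly from Lemma~\ref{lemma:fixed_strategy_nr}: since no-swap-regret implies no-regret, playing the static distribution $D$ forces the learner to play a best response to $D$ in all but an $o(1)$ fraction of rounds, so the optimizer's payoff is at least $T\cdot u_o(\BR_{\ell}^*(D),D) - o(T)$, which matches the Stackelberg value (after the tie-breaking perturbation of $D$).

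For the leader's side, the observation is that when the follower plays a fixed, non-responsive distribution $D$ in every round, the leader's $T$-round decision problem decouples into $T$ independent identical stage problems. The maximum expected payoff achievable by \emph{any} algorithm in algorithm space is therefore exactly $T\cdot\max_{x} u_{l}(x,D)$, attained by playing the per-round best response deterministically. The no-swap-regret property of $\cA_{l}$ implies the standard no-regret guarantee, which yields
\[
U_{l}(\cA_{l},\cA_{o}^{s}) \;\geq\; T\cdot\max_{x}u_{l}(x,D)\;-\;o(T),
\]
so the leader has at most $o(T)$ incentive to deviate. Combining both sides gives the claimed $o(T)$-approximate Nash equilibrium.

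The main subtlety I expect to have to handle is the tie-breaking convention in the Stackelberg definition. Lemma~\ref{lemma:fixed_strategy_nr} only guarantees payoff against the \emph{worst} best response $\BR_\ell^*(D)$, whereas the Stackelberg leader value is usually defined assuming ties are broken in favor of the leader. The standard fix, which I would adopt, is to replace $D$ by a slightly perturbed distribution $D^\varepsilon$ making the learner's best response unique, at a cost of only $o(1)$ per-round value; taking $\varepsilon\to 0$ at a sublinear rate in $T$ preserves the $o(T)$-approximate Nash property without needing any assumption beyond what Lemma~\ref{lemma:nsr_bestresponse} already incorporates. Nothing in the argument is specific to the Bertrand or logit payoff structure, so the statement indeed holds for arbitrary stage games, as the theorem asserts.
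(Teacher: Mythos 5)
Your proposal is correct and follows essentially the same route as the paper: Lemma~\ref{lemma:nsr_bestresponse} shows a static (perturbed) Stackelberg-leader distribution is an $o(T)$-best response for the follower against the no-swap-regret learner, and the external-regret guarantee (implied by no-swap-regret) shows the learner is an $o(T)$-best response to any static distribution since the per-round best response is optimal. The only stylistic difference is that you re-derive the ``follower side'' from Lemma~\ref{lemma:fixed_strategy_nr} plus the \citet{deng2019strategizing} upper bound, whereas the paper simply cites the best-response characterization of Lemma~\ref{lemma:nsr_bestresponse} directly; both are sound, and your explicit treatment of the tie-breaking perturbation is a welcome clarification of a step the paper leaves implicit.
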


\begin{proof}
In particular, such an equilibrium exists between any algorithm with total swap regret $r(T) = o(T)$ and the static algorithm which plays (something very close to) the Stackelberg leader strategy of the stage game each round. To show that this is true, we will show that each is an $\frac{o(T)}{T}$-approximate best response to the other. 

First, we will show that a particular distribution (which is $\varepsilon$-close to the Stackelberg leader strategy of the stage game) is near-optimal against any No-Swap Regret algorithm (NSR) when played on every round. By Lemma~\ref{lemma:nsr_bestresponse}, there exists a distribution $D'$ such that $||D-D'||_\infty \le \varepsilon$, where $D$ is the static Stackelberg leader strategy and $D'$ is a $o(T)$-best response to any algorithm with sublinear swap regret $r(T)$. 
     
It remains to show that NSR is optimal against this distribution. However, NSR is optimal against any static distribution played on every round. The best response against any algorithm playing a static distribution is to play (one of) the best-response action(s) each day. Therefore, the gap between the average payoff of the optimal response and the average payoff attained by the NSR algorithm is bounded by the average external regret of the NSR algorithm, which is at most $r(T) = o(T)$. 
\end{proof}

Here, notably, both players are ``succeeding in optimizing'' since they are playing best responses in algorithm space against one another, neither is using an algorithm which encodes threats, and yet the outcome is near monopoly prices.

\subsection{(Many) No-Regret Dynamics Converge to Competitive Prices When Used by Both Sellers}


The main result of this subsection is showing that both sellers using mean-based no-regret algorithms (which include all algorithms in the follow-the-regularized-leader family such as hedge, FTPL) results in convergence to competitive prices. This complements \cite{hartline2024regulation} who show a similar result for the class of no swap regret algorithms. 

\begin{theorem}
\label{theorem:mbnr_convergence}
If both sellers set prices using a $\frac{1}{\sqrt{T}}$-mean-based no-regret algorithm~\footnote{The $\frac{1}{\sqrt{T}}$ mean-basedness is chosen because fixing the optimal static learning rate for algorithms such as multiplicative weights and follow-the-perturbed leader results in this property. 
}, after $O(k^{3k} \sqrt{T \log k})$ rounds the sellers will pick competitive prices with probability $(1-o_T(1))$ in each subsequent round of play 
regardless of the initial distribution played by either algorithm. 
\end{theorem}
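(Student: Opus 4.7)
The plan is an iterative-undercutting argument: by Lemma~\ref{lemma:unique_ne} the competitive prices in the Bertrand stage game are precisely $\{1/k, 2/k\}$, so it suffices to show that after $O(k^{3k}\sqrt{T\log k})$ rounds, both mean-based algorithms place on each price $i/k$ with $i \ge 3$ only a vanishing amount of per-round probability mass. I would do this by downward induction on $i \in \{k, k-1, \ldots, 3\}$, producing thresholds $T_k \le T_{k-1} \le \cdots \le T_3$ such that, for every $t \ge T_i$, each algorithm places at most $\gamma = O(\sqrt{\log k / T})$ per-round probability on price $i/k$. Lemma~\ref{lemma:marginal_domination} is the workhorse: it turns a time-averaged mass bound of $1/(24k)$ on prices above some $x \ge 3/k$ into a strict utility gap of $1/(24k^2)$ at $x$, and the mean-based property converts this gap into a per-round probability bound.

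\textbf{Base case ($i = k$).} The mass above the maximum price $1$ is trivially $0$, so Lemma~\ref{lemma:marginal_domination} applies immediately and gives a $1/(24k^2)$ dominance gap for price $1$ against \emph{any} opponent distribution. By Definition~\ref{def:mean-based}, once the cumulative utility gap for price $1$ clears the ``noise floor'' $\log(1/\gamma)/\eta = \Theta(\sqrt{T\log k})$ characteristic of a learning rate $\eta = \Theta(\sqrt{\log k/T})$ (as in MW or FTPL), the per-round weight on price $1$ is at most $\gamma$. This occurs after $T_k = O(k^2\sqrt{T\log k})$ rounds.

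\textbf{Inductive step.} Assuming the claim at level $i+1$, the opponent's historical mass on prices strictly above $i/k$ at round $t \ge T_{i+1}$ is at most
\[
\frac{T_{i+1}}{t} + (k - i)\gamma,
\]
because for the first $T_{i+1}$ rounds we charge mass of $1$ conservatively, and each subsequent round contributes at most $\gamma$ for each of the $k - i$ already-eliminated prices. This quantity is at most $1/(24k)$ once $t \ge 48k \cdot T_{i+1}$ and $\gamma \le 1/(48k(k-i))$ (automatic for large $T$). Lemma~\ref{lemma:marginal_domination} then applies with $x = i/k \ge 3/k$ and $b \le 1/(24k)$, producing a $1/(24k^2)$ dominance gap at $i/k$, and mean-basedness turns this into a per-round probability of at most $\gamma$ on $i/k$ from round $T_i := 48k \cdot T_{i+1}$ onwards.

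\textbf{Finishing up.} Unrolling the recurrence $T_i = 48k \cdot T_{i+1}$ from $T_k = O(k^2\sqrt{T\log k})$ gives $T_3 = O((48k)^{k-3} \cdot k^2 \sqrt{T\log k})$, which is absorbed by the stated $O(k^{3k}\sqrt{T\log k})$. A union bound over the $k-2$ non-competitive prices and over the two independent algorithms shows that for every $t \ge T_3$, both sellers simultaneously play some price in $\{1/k, 2/k\}$ with probability at least $1 - 2(k-2)\gamma = 1 - o_T(1)$. The main technical obstacle will be carefully tracking the $(k-i)\gamma$ contribution of already-eliminated prices to the historical average at every level, ensuring it stays safely below $1/(24k)$ so the hypothesis of Lemma~\ref{lemma:marginal_domination} continues to apply down the chain, and pinning down the base-case burn-in $T_k$ for the specific mean-based algorithms of interest (in particular making sure the $\sqrt{T\log k}$ appearing in the final bound is really the right dependence induced by the $\eta = \Theta(\sqrt{\log k/T})$ learning rate).
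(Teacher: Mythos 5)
Your plan is essentially the paper's own proof of Theorem~\ref{theorem:mbnr_convergence} (via Lemma~\ref{lemma:bertrand_mbnr_convergence}): the same downward induction on prices, the same use of Lemma~\ref{lemma:marginal_domination} to convert a $1/(24k)$ historical-mass bound into a $1/(24k^2)$ dominance gap, the same appeal to the mean-based property to turn that gap into a vanishing per-round probability, and the same geometric blow-up of thresholds yielding the $O(k^{3k}\sqrt{T\log k})$ burn-in. The only difference is bookkeeping: the paper tracks the cumulative probability on prices $\ge i/k$ via the explicit invariant $(k-i)/k^{2.5}$ and uses a blow-up factor of $k^3$, while you track it as $T_{i+1}/t + (k-i)\gamma$ with a $48k$ blow-up; both implement the same idea and are absorbed by the stated bound.
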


The proof of this theorem directly follows from the application of the lemma below.

\begin{lemma}
\label{lemma:bertrand_mbnr_convergence}
    After $t \ge 24 k^{3(k-i+1)} \sqrt{T \log k}$ rounds, both $\frac{1}{\sqrt{T}}$-mean-based no-regret algorithms pick prices $\frac{i}{k}$ or higher with an aggregate probability of at most $\frac{(k-i)}{k^{2.5}}$ for all $i > 3$.
\end{lemma}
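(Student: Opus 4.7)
The plan is to prove the lemma by downward induction on $i$, from $i = k$ down to $i = 4$. The engine of the induction is Lemma~\ref{lemma:marginal_domination}: once an opponent's distribution places total probability at most $\frac{1}{24k}$ strictly above $\frac{i}{k}$, pricing at $\frac{i}{k}$ is strictly dominated by some lower price by a per-round margin of at least $\frac{1}{24 k^2}$. Definition~\ref{def:mean-based} of $\gamma(t)$-mean-basedness then converts a cumulative empirical gap of size at least $\gamma(t)\cdot t$ into a bound of $\gamma(t) = \frac{1}{\sqrt T}$ on the probability mass the algorithm places on $\frac{i}{k}$.

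\textbf{Base case} ($i = k$). Lemma~\ref{lemma:marginal_domination} applies with $x = 1$ and $b = 0$, since no opponent weight can strictly exceed $1$, so against \emph{every} opponent distribution some fixed lower price beats the price $1$ by at least $\frac{1}{24 k^2}$ per round. After a mild number of rounds the average gap exceeds $\gamma(t) = 1/\sqrt T$, and mean-basedness forces the weight on price $1$ to be at most $1/\sqrt T \le 1/k^{2.5}$ for $T$ polynomially large in $k$, matching $\frac{k - i}{k^{2.5}} = 0$ up to the $1/\sqrt T$ slack.

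\textbf{Inductive step.} Suppose that after $t_{i+1} \ge 24 k^{3(k-i)}\sqrt{T \log k}$ rounds, each player's per-round distribution places total probability at most $\frac{k-i-1}{k^{2.5}} \le \frac{1}{24 k}$ on prices strictly above $\frac{i}{k}$. The key observation is to apply Lemma~\ref{lemma:marginal_domination} not to each instantaneous opponent distribution (which would produce a round-dependent challenger and require a lossy pigeonhole argument over the $k$ lower prices), but to the \emph{time-averaged} opponent distribution $\bar y = \frac{1}{t - t_{i+1}} \sum_{s = t_{i+1}+1}^{t} y_s$. Since $\bar y$ still places mass at most $\frac{1}{24 k}$ above $\frac{i}{k}$, the lemma yields a \emph{single} lower price $\frac{j^*}{k}$ with $u_L(j^*/k, \bar y) - u_L(i/k, \bar y) \ge \frac{1}{24 k^2}$. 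Linearity of $u_L$ in its second argument immediately converts this into an empirical guarantee
\[
\sum_{s = t_{i+1}+1}^{t} u_L(j^*/k, y_s) \; - \; \sum_{s = t_{i+1}+1}^{t} u_L(i/k, y_s) \; \ge \; \frac{t - t_{i+1}}{24 k^2}.
\]
The pre-burn-in rounds $s \le t_{i+1}$ contribute an additive deficit of at most $t_{i+1}$ (payoffs lie in $[0,1]$), so the average gap at time $t$ is at least $\tfrac{1}{t}\bigl(\tfrac{t - t_{i+1}}{24 k^2} - t_{i+1}\bigr)$. Requiring this to exceed $\gamma(t) = 1/\sqrt T$ and collecting terms shows it suffices to take $t \ge c \cdot k^2 \cdot t_{i+1}$ for an absolute constant $c$ (provided $T = \Omega(k^4)$). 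Absorbing constants and the $\sqrt{T \log k}$ slack into the generous factor of $k^3$ per induction level yields the stated $t_i \ge 24 k^{3(k-i+1)} \sqrt{T\log k}$. Mean-basedness then bounds the per-round probability on $\frac{i}{k}$ by $1/\sqrt T \le 1/k^{2.5}$, and summing with the inductive bound $\frac{k-i-1}{k^{2.5}}$ on prices above $\frac{i}{k}$ yields the claimed aggregate bound $\frac{k-i}{k^{2.5}}$.

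The main obstacle I anticipate is the mismatch between the ``per-round best challenger'' promised by Lemma~\ref{lemma:marginal_domination} and the ``historically best challenger'' required by the mean-based property, since a round-by-round challenger that bounces between lower prices need not aggregate into a single winner in the empirical average. The linearity of $u_L$ in the opponent's mixed strategy is what sidesteps this: evaluating Lemma~\ref{lemma:marginal_domination} on the \emph{time-averaged} opponent distribution identifies one \emph{fixed} challenger whose historical average payoff exceeds that of $\frac{i}{k}$ by enough to activate mean-basedness, after which the burn-in calculation becomes elementary.
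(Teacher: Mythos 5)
Your proof is correct and takes essentially the same approach as the paper: a downward induction on $i$ in which each inductive step applies Lemma~\ref{lemma:marginal_domination} to the opponent's \emph{time-averaged} distribution (thereby identifying a single fixed challenger price, exactly as the paper does when it invokes the lemma ``against Player~2's time average distribution'') and then invokes the $\frac{1}{\sqrt{T}}$-mean-based property to suppress the weight on $\frac{i}{k}$. The only difference is a cosmetic bookkeeping choice: you truncate the time-average at $t_{i+1}$ and treat the pre-burn-in rounds as an additive deficit of at most $t_{i+1}$, whereas the paper keeps all rounds in the average and uses the $k^3$ magnification to argue the burn-in contributes at most $\frac{1}{k^3}$ to the mass above the cutoff --- both choices yield the same $k^3$ per-level growth, and both inherit the same implicit requirement (which the paper flags as ``for large enough $k$'') that $k$ and $T$ be large enough for bounds like $\frac{k-i-1}{k^{2.5}} \le \frac{1}{24k}$ and $\frac{1}{\sqrt{T}} \le \frac{1}{k^{2.5}}$ to hold.
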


\begin{proof}

    [Base Case : $i=k$] We begin by showing that after $24 k^3\sqrt{T}$ rounds, the MBNR algorithms learn not to play action $1$ with probability greater than $\frac{\log k}{\sqrt{T}}$. A direct consequence of Lemma~\ref{app:marginal_domination} is that action $1$ lags the leader action in terms of maximum payoff (in hindsight) by $\sqrt{T}$ after these rounds, since action $1$ is worse by $\frac{1}{24k^2}$ in payoff against the time average distribution of the other player at the end of these rounds as compared to some action $x' \in \cP$ with $x' < 1$.
    It suffices to extend the induction hypothesis from the perspective of one of the MBNR players, WLOG Player 1.

    Assuming the proof holds up to a given $i$, we extend it for $i-1$ using the following idea. The outline is that we start at time $t'$, beyond which prices $\frac{i}{k}$ and above are guaranteed to be picked with low probability on a round-to-round basis. Then, we magnify $t'$ by a factor of $k^3$ to get $t''$, so as to argue that the average distributions of either player beyond this point has negligible probability on prices $\frac{i}{k}$ and above. This allows us to use Lemma~\ref{app:marginal_domination} to argue that at $t \ge t''$, price $\frac{i-1}{k}$ is much worse in hindsight than some other price, implying that $\frac{i-1}{k}$ is also henceforth picked with negligible probability on a round-to-round basis, successfully extending the induction hypothesis. The exact argument is given below.


    On any round $t$ with  $t \ge t' = 24k^{3(k-i+1)} \sqrt{T \log k}$; we know that the MBNR algorithm of Player 2 picks prices above $\frac{i}{k}$ with probability at most  $\frac{(k-i)}{k^{2.5}}$. Now pick $t'' = k^3 t'$ and consider the marginal distribution of player 2 at this point. This distribution, based on our induction hypothesis puts probability mass at most $\frac{t' + \frac{(k-i)}{k^{2.5}}k^3 t'}{k^3t'} \le \frac{1}{24k}$ on or above price $\frac{i}{k}$. The same holds true for time steps $t \ge t''$.

    For large enough $k$ and $i > 3$, we can now invoke Lemma~\ref{lemma:marginal_domination} with $b = \frac{1}{24k}$ and $x = \frac{i-1}{l}$ to argue that for Player 1, the action $\frac{i-1}{k}$ obtains at least $\frac{1}{24k^2}$ less in payoff as compared to some price $x' < \frac{i-1}{k}$
    (for large $k$) against Player 2's time average distribution after $t\ge t''$ rounds. This implies that the aggregate lag of the payoff of action $\frac{i-1}{k}$ against the leader action in hindsight after $t \ge t''$ rounds is at least $\frac{t''}{24k^2} 
    \geq \sqrt{T}\log(k)$. Thus we can invoke the mean based property implying action $\frac{i-1}{k}$ is henceforth (for $t \ge t''$) picked by Player 1 with probability at most $\sqrt{\frac{\log k}{T}}$. Thus, the total probability picking price $\frac{i-1}{k}$ or higher is upper bounded by $\frac{(k-i)}{k^{2.5}} + \sqrt{\frac{\log k}{T}}$, which is upper bounded by $\frac{(k-i +1)}{k^{2.5}}$ for sufficiently large $T$, completing the extension of the induction hypothesis.
\end{proof}

We also provide an elementary proof of a similar result about no-swap-regret algorithms. However, these are not new results, having been established already for a more general model by~\citet{hartline2024regulation} and~\citet{chassang23}. We include these proofs in Appendix~\ref{app:nsr_convergence}.

\section{Numerical Investigation of Constants}
\label{sec:experiments}


We showed in Lemma~\ref{lemma:uniform_is_good_enough} that the optimizer can get a payoff of $\Omega_k(T)$ against any no-regret learner, just by playing the uniform distribution over their prices. An implication is that they would do at least as well with their optimal static strategy, which was enough for our results which are stated in asymptotic notation. While we did not get attractive constants via this simple approach, we numerically verified that the optimal static optimizer distribution against a no-regret algorithm, which is the Stackelberg leader strategy for a number of values of $k$, results in average prices of more than $\frac{2}{3}$ (in fact approximately $\frac{2(k-1)}{e k}$). ~\cite{conitzer2006computing} give an efficient algorithm for computing the Stackelberg equilibrium of a two player game, which we use to numerically compute the Stackelberg leader strategy as well as the payoff of the Stackelberg leader, which will approximately (up to subconstant in $T$ additive error) be the average payoff of the optimizer repeatedly playing the Stackelberg leader strategy (using Lemma~\ref{lemma:fixed_strategy_nr}). Our numerical results show that the Stackelberg value of the leader (as well as the follower) is approximately $\frac{k-1}{ek}$ for a range of values of $k$ in Figure~\ref{fig:payoff_growth}. By Observation~\ref{obs:value_transfer}, if the follower optimizes over the space of static strategies, the average price for the buyer will be $\frac{2(k-1)}{ek} \geq \frac{2}{3}$. We note that while our experiments show these prices are achievable by non-responsive optimizers for all values of $k$ from $1$ to $200$, it remains interesting future work to prove that this holds for all $k$. Finding a closed form analytical expression for the Stackelberg leader strategy of the stage game is an open problem and would likely help in proving the conjecture for all  $k$.

\begin{figure}[h]
    \centering
    \begin{subfigure}[b]{0.48\textwidth}
        \centering
        \includegraphics[width=\textwidth]{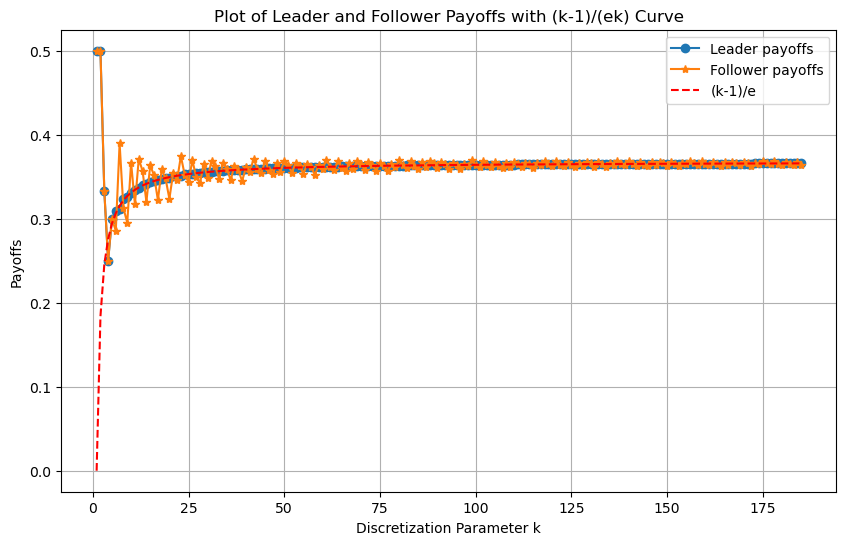}
        \caption{Stackelberg Leader/Follower Payoffs for varying $k$}
        \label{fig:payoff_growth}
    \end{subfigure}
    \hfill
    \begin{subfigure}[b]{0.48\textwidth}
        \centering
        \includegraphics[width=\textwidth]{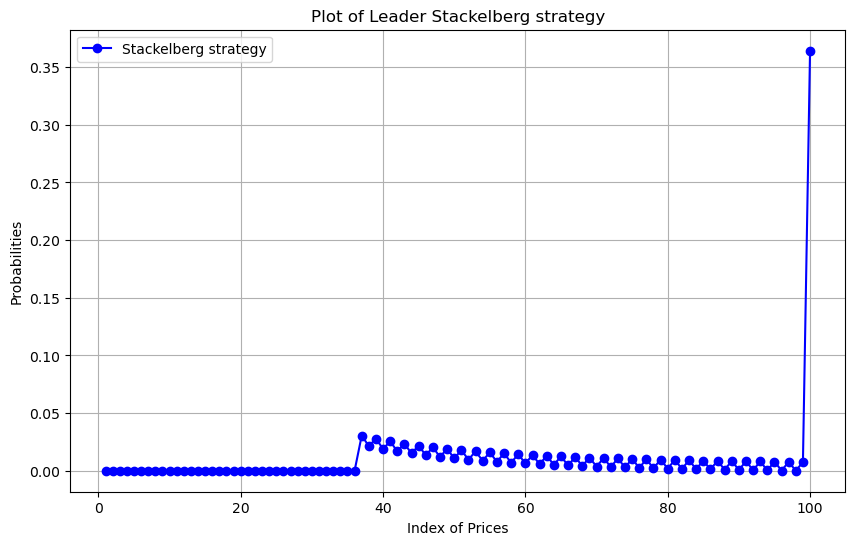}
        \caption{Stackelberg Leader Strategy for $k=100$.}
        \label{fig:stack_100}
    \end{subfigure}
    \label{fig:combined_stackelberg}
\end{figure}

We also show the Stackelberg strategy for $k=100$ in Figure~\ref{fig:stack_100}. Our numeric results also indicate that every price between $36/100$ to $98/100$ is (tied for being) a best response for the follower given the optimal leader commitment\footnote{In the Stackelberg Equilibrium problem, the follower typically tie-breaks in favor of the leader, in our application the optimizer adds some infinitesimal extra probability mass on price $ 99/100$ to ensure that no-regret learner learns a unique best-response of $98/100$ (which is the best outcome for the optimizer).}.

The code for these experiments can be found at \href{https://github.com/eshwarram/Non_Myopic_Pricing.git}{Code Repository}.

\section{Discussion and Conclusion}
Defining anti-competitive behavior is delicate. Monopoly like prices can arise in a number of scenarios: a \emph{failure to optimize} (if e.g. players are simply not best responding to one another) as well as through \emph{collusion} (which has been interpreted as requiring explicit \emph{threats} when speaking of algorithmic collusion \cite{calvano20,harrington}). Past work \cite{hartline2024regulation,chassang23} have proposed no-swap-regret algorithms as reasonable competitive algorithms in pricing scenarios, on the basis that they converge to competitive prices when both parties use them, and they seem both to successfully optimize and not to encode threats. 

Our results complicate this picture. We show that if the first entrant into a market deploys a pricing algorithm with the no-swap-regret guarantee, then this very strongly incentivizes the next entrant to deploy an algorithm that will lead to supra-competitive prices. In fact, \emph{anything the second entrant does} that obtains them profit at least that of a random pricing strategy will inevitably lead to supra-competitive prices. Moreover, this will not be at the expense of the first entrant---the no(-swap) regret learner will \emph{also} enjoy supra-competitive revenue. And the algorithm of the second entrant can be entirely static and non-responsive, and therefore unable to encode threats. In fact, this phenomenon does not hinge on the sequential nature of play that we focus on and does not hinge on either player having commitment power. As we show, there is a Nash equilibrium of the game---in algorithm space---maintaining supra-competitive prices, which involves one player playing a no swap regret algorithm and the other playing a static pricing distribution. Both players are best responding to one another in the space of all pricing algorithms (without any restriction) and so neither player is failing to optimize---but neither are either of the players deploying threats. We suggest that this might require a reconsideration of what constitutes algorithmic collusion.

\subsection*{Acknowledgments}

We thank Rakesh Vohra and Deke Hill for valuable discussions on the subject of algorithmic collusion.

SK was on leave from the University of Pennsylvania and serving as the Associate Director of the Simons Institute for the Theory of Computing at the time of writing of this paper.

ERA was  supported by NSF grants CCF 1910534 and 2045128. NC and AR were partially supported by NSF grants FAI-2147212 and CCF-2217058, the Hans Sigrist Prize, and the Simons Collaboration on Algorithmic Fairness.

\bibliography{pricing_refs}
\bibliographystyle{plainnat}

\appendix
\section{Assorted Results and Proofs - Bertrand Duopoly Model}

\begin{lemma}
An optimizer playing against a no regret algorithm cannot get utility greater than $\Omega(4(\sqrt{\frac{3}{2}} - 1)kT < \Omega(\frac{9}{10}kT)$.
\end{lemma}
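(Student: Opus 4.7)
Set $c^\star := 4(\sqrt{3/2}-1) = -4+2\sqrt{6}$. The plan is to mirror the structure of Theorem~\ref{thm:equal_looting} and argue by contradiction: suppose the optimizer extracts cumulative payoff at least $c\,T$ for some fixed constant $c > c^\star$, and derive a contradiction by exhibiting a fixed-price deviation for the learner whose no-regret bound, combined with the Bertrand identity $u_o + u_L = \min(p_o,p_L) \le 1$, forces $U_o + U_L > T$.

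First I would establish that in the empirical distribution of play, $\Pr[p_o \ge c/2] \ge c/2$. This is exactly the Markov-style inequality used in the proof of Theorem~\ref{thm:equal_looting}: because $u_o(p_o,p_L) \le p_o \le 1$ in every round, the assumption $U_o \ge c T$ forces $c \le \Pr[p_o \ge c/2] + (1 - \Pr[p_o \ge c/2])\cdot(c/2)$, which rearranges as claimed.

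Next I would invoke the learner's no-regret guarantee against the fixed comparator ``play the price $c/2$ in every round.'' (If $c/2 \notin \cP$ one rounds to the nearest grid point, incurring an $O(T/k)$ loss absorbed into the eventual $o(T)$ slack, since $9/10 - c^\star$ is a positive constant.) Under the Bertrand allocation rule, this fixed comparator earns per round $(c/2)\,\mathbb{I}[p_o > c/2] + (c/4)\,\mathbb{I}[p_o = c/2]$, whose worst-case average against any optimizer empirical distribution satisfying $\Pr[p_o \ge c/2]\ge c/2$ arises when the entire $\ge c/2$ mass is concentrated exactly at the tie $p_o = c/2$, giving at least $(c/4)(c/2) = c^2/8$. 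The no-regret property then yields $U_L \ge (c^2/8)T - o(T)$.

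Finally, the Bertrand identity $u_o + u_L = \min(p_o,p_L) \le 1$ holds pointwise in each round, so $U_o + U_L \le T$. Combining with $U_o \ge cT$ and $U_L \ge (c^2/8)T - o(T)$ yields $c + c^2/8 \le 1 + o(1)$; the positive root of $x^2 + 8x - 8 = 0$ is $-4 + 2\sqrt{6} = c^\star$, so any fixed $c > c^\star$ strictly violates this inequality once $T$ is sufficiently large, producing the desired contradiction. The main obstacle, I anticipate, is purely bookkeeping: tracking the $o(T)$ regret residual and the $O(T/k)$ discretization slack simultaneously so that both vanish faster than the constant gap $c - c^\star$ before we pass to the limit.
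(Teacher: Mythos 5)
Your proof is correct and follows essentially the same route as the paper's: the paper cites Theorem~\ref{thm:equal_looting} directly to get $U_l \geq \tfrac{c^2}{8}T - o(T)$ and then combines with the per-round welfare bound to obtain $\tfrac{c^2}{8} + c \leq 1$, exactly the quadratic you solve. Your write-up just inlines the Theorem~\ref{thm:equal_looting} argument (the $\Pr[p_o \geq c/2]\geq c/2$ step and the fixed-comparator-at-$c/2$ step) rather than invoking it as a black box, and phrases the conclusion as a contradiction instead of a direct bound; the discretization and $o(T)$ bookkeeping you flag is a small refinement of the paper's cleaner presentation, and the spurious $k$ factor in the lemma statement is a normalization artifact that both arguments implicitly absorb.
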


\begin{proof}
By Theorem~\ref{thm:equal_looting}, if the total utility of the optimizer is $ckT$, then the total utility of the no regret learner is at least $\frac{c^{2}kT}{8}$. Given that the maximum possible welfare each round is $k$, the maximum possible welfare over the entire game is $kT$, and therefore
\begin{align*}
& ckT +  \frac{c^{2}kT}{8} \leq kT \\
& \implies \frac{1}{8}c^{2} + c - 1 \leq 0\\
& \implies c \leq 4(\sqrt{\frac{3}{2}} - 1) < \frac{9}{10} \\
\end{align*}
\end{proof}

\subsection{Proof of Lemma~\ref{lemma:marginal_domination}}
\label{app:marginal_domination}

Let the distribution of the second seller be $d$. 
There are two cases:

\begin{itemize}

\item Case 1:  $\Pr[d = x] \geq \frac{1}{6k}$. 
Since $x \geq \frac{3}{k}$, it is possible for the first seller to price at $x - \frac{1}{k}$. The price $x-\frac{1}{k}$ gives the first seller utility at least $\Pr[d = x]\cdot (x-\frac{1}{k}) + \Pr[d > x]\cdot (x-\frac{1}{k})$, while action $x$ gives utility $\frac{\Pr[d = x]\cdot x}{2} + \Pr[d > x]\cdot x$. The gap between these utilities is
\begin{align*}
& \Pr[d = x]\cdot \left(x-\frac{1}{k}\right) + \Pr[d > x]\cdot \left(x-\frac{1}{k}\right) - \frac{\Pr[d =x] \cdot  x}{2} - \Pr[d > x]\cdot x\\
& =  \Pr[d = x]\cdot \left(\frac{x}{2}-\frac{1}{k}\right) - \frac{\Pr[d > x]}{k}\\
& \geq \frac{1}{6k} \cdot \left(\frac{x}{2}-\frac{1}{k}\right) -  \frac{\Pr[d > x]}{k}\\
& = \frac{1}{6k} \cdot\left(\frac{x}{2}-\frac{1}{k}\right) - \frac{b}{k} \\
&\geq \frac{1}{12k^2} - \frac{b}{k} \\
&\geq \frac{1}{12k^2} - \frac{1}{24k^2}
=\frac{1}{24k^2} \\
\end{align*}

\item Case 2: $\Pr[d = x] < \frac{1}{6k}$. Then, the utility for the second seller of playing $x$ is at most $\frac{1}{6k} \cdot \frac{x}{2} + \Pr[d > x]\cdot x \leq x \cdot \left (\frac{1}{12k} + b \right) \leq 
\frac{x}{k}\cdot \left(\frac{1}{12} + \frac{1}{24} \right) \leq \frac{1}{6k}$.  
Then, note that the payoff for the second seller of pricing at $\frac{1}{k}$ every round is at least $\frac{1}{2k}$. Thus, the gap in payoffs between action $x$ and action $\frac{1}{k}$ is at least $\frac{1}{2k} - \frac{1}{6k} = \frac{1}{3k}$. 
\end{itemize}

\subsection{Proof of Lemma~\ref{se_not_the_same_as_ne}}
\label{app:se_proof}
\begin{proof}
We utilize the framework introduced in~\cite{collina2023efficient} to find the Stackelberg equilibrium of this repeated game. Consider the following leader algorithm for some interaction of length $T$: from rounds $1$ to $\bar{t} = T - \frac{T}{2(\frac{k-1}{k})}- 1$, the leader asks the follower to price at $1$, while the leader prices at $1 - \frac{1}{k}$. If the follower does so, the leader will price at $1$ for the remaining rounds. However, if at any point the follower deviates to a price that is not $1 - \frac{1}{k}$ in these initial $\bar{t}$ rounds, the leader will deviate to pricing at $\frac{1}{k}$ for the remainder of the game.  

Let time $t^{*}$ be the first round in which the follower prices at some price which is not $1$, and let that price be $p_{t^{*}}$. If $t^{*} \leq \bar{t}$, then the utility of the follower is at most
\begin{align*}
    & (\sum_{t=1}^{t^{*}-1}0) + p_{t} + (\sum_{t=t^{*}+1}^{T}\frac{\frac{1}{k}}{2})
    \\ & \leq (1 - \frac{2}{k}) + \sum_{t=t^{*}+1}^{T}\frac{1}{2k}
    \\ & \leq 1 - \frac{2}{k} + \frac{T}{2k}
\end{align*}

If $t^{*} = \bar{t} + 1$, the follower can price at $1 - \frac{1}{k}$ for the remaining rounds. Then their utility is 

\begin{align*}
    & (\sum_{t=1}^{\bar{t}}0) +(\sum_{t=\bar{t}+1}^{T} (1 - \frac{1}{k}))
    \\ & =(T - \bar{t})(1 - \frac{1}{k})
    \\ & =(\frac{T}{2((1 - \frac{1}{k}))}+1)(1 - \frac{1}{k})
    \\ & =\frac{T}{2}+ 1 - \frac{1}{k}
\end{align*}

This is strictly larger than the best utility that the follower can get if $t^{*} \leq \bar{t}$. Therefore, the best response by the follower will involve them pricing at $1$ for the first $\bar{t}$ rounds. Thus the utility of the leader is at least

\begin{align*}
    (1 - \frac{1}{k}) \cdot \bar{t} 
     & = (T - \frac{T}{2(1 - \frac{1}{k})}- 1) (1 - \frac{1}{k})
     \\ & = (T-1)(1 - \frac{1}{k}) - \frac{T}{2}
    \\ & \geq (T-1)(\frac{19}{20}) - \frac{T}{2} \tag{As $k \geq 20$}
    \\ & \geq \frac{3T}{4}(\frac{19}{20}) - \frac{T}{2} \tag{For $T \geq 4$}
    \\ & = \frac{17}{80}T = \Omega_{k}(T)
\end{align*}

This particular commitment algorithm gets the leader average utility of $\Omega_{k}(1)$. Therefore, the optimal Stackelberg commitment algorithm in the Bertrand pricing game gets the leader utility of $\Omega{k}(1)$. Finally, by Observation~\ref{obs:value_transfer}, this lower bounds the average price, and therefore the average price is also $\Omega{k}(1)$. 
\end{proof}

\subsection{NSR Convergence}

\label{app:nsr_convergence}

\begin{theorem}
\label{thm:nsr_vs_nsr}
    If both players use no swap-regret algorithms, then the prices converge to competitive prices.
\end{theorem}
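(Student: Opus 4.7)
The plan is to reduce to a characterization of correlated equilibria (CE) of the Bertrand stage game, exploiting the standard fact that if both players run no-swap-regret algorithms, then the time-averaged empirical joint distribution of play $\hat\mu_T$ converges to the set of CE up to a vanishing error that scales as the average swap regret. Concretely, unrolling the no-swap-regret guarantee gives that for each player $i$ and each pair $x, x' \in \cP$, $\sum_y \hat\mu_T(x, y) [u_i(x, y) - u_i(x', y)] \geq -r(T)/T$, i.e., $\hat\mu_T$ is an $(r(T)/T)$-approximate CE. Since $r(T)/T = o(1)$, passing to any subsequential limit of $\hat\mu_T$ (which exists by compactness of the simplex over $\cP \times \cP$) produces an exact CE $\mu^*$. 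It therefore suffices to show that every exact CE of the Bertrand stage game has both marginals supported on the competitive prices $\{\tfrac{1}{k}, \tfrac{2}{k}\}$.

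For the CE characterization I would invoke Lemma~\ref{lemma:marginal_domination} in a one-shot argument. Let $x^* = \max(\text{supp}(\mu^*_1) \cup \text{supp}(\mu^*_2))$ and assume toward contradiction that $x^* \geq \tfrac{3}{k}$. Without loss of generality $x^* \in \text{supp}(\mu^*_1)$. The conditional distribution $q(\cdot \mid x^*)$ over player 2's actions given that player 1 is recommended $x^*$ is supported on $\text{supp}(\mu^*_2) \subseteq [\tfrac{1}{k}, x^*]$, so the weight it places above $x^*$ equals $0 \leq \tfrac{1}{24k}$. Applying Lemma~\ref{lemma:marginal_domination} with $x = x^*$ and $b = 0$ yields an $x' < x^*$ such that $u_1(x', q(\cdot \mid x^*)) \geq u_1(x^*, q(\cdot \mid x^*)) + \tfrac{1}{24k^2}$, contradicting the CE best-response condition that $x^*$ is a best response to $q(\cdot \mid x^*)$. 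Hence $x^* \leq \tfrac{2}{k}$, i.e., both marginals of $\mu^*$ are supported on $\{\tfrac{1}{k}, \tfrac{2}{k}\}$. Since every subsequential limit of $\hat\mu_T$ has this property, the full sequence $\hat\mu_T$ must place vanishing mass outside $\{\tfrac{1}{k}, \tfrac{2}{k}\}^2$. Combining with Observation~\ref{obs:value_transfer}, the time-averaged buyer price converges to a competitive price.

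The main obstacle I anticipate is that the subsequence-limit argument relies on the single-step dominance bound being ``activated'' only at the \emph{maximum} of the joint support; it doesn't directly give a quantitative rate on $\hat\mu_T$'s tail mass in finite $T$. If one wants a finite-sample statement rather than an asymptotic one, I would instead iterate Lemma~\ref{lemma:marginal_domination} downward from the highest price: having shown $\mu_i(1) \leq 24 k^2 \varepsilon$ for $\varepsilon = r(T)/T$, plug this into the computation of $q(\cdot \mid x)$ for $x = 1 - \tfrac{1}{k}$ (where now $b$ is no longer zero but is bounded by $\mu_2(1)/\mu_1(x)$), and then use the $\varepsilon$-CE version of the dominance inequality to bound $\mu_i(1 - \tfrac{1}{k})$. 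Each iteration requires verifying that $b \leq \tfrac{1}{24k}$ remains valid, which in turn requires a careful lower bound on the marginal probability at the current price (or else one concludes directly that the mass is negligible). The book-keeping here, rather than any single analytic difficulty, is the trickiest part; however, the asymptotic version sketched above suffices for the statement of Theorem~\ref{thm:nsr_vs_nsr} as phrased, since no explicit convergence rate is requested.
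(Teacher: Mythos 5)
Your proposal is correct and matches the paper's approach: the paper likewise reduces to the standard fact that no-swap-regret dynamics converge to (approximate) correlated equilibria, and then proves (its Lemma~\ref{lemma:ce_competitive}) that all CE of the Bertrand stage game are supported on prices at most $\frac{2}{k}$, invoking the same marginal-domination lemma via Corollary~\ref{corr:max_dominated}. The only cosmetic difference is that the paper runs a downward induction from price $1$, whereas you argue in one shot at the maximum of the joint support with $b=0$; both are essentially the same argument.
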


The proof of this result follows from the lemma below.
\begin{lemma}
\label{lemma:ce_competitive}
    All Correlated Equilibria are supported on prices that are at most $\frac{2}{k}$.
\end{lemma}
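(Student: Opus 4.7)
The plan is to adapt the descending argument from Lemma~\ref{lemma:unique_ne} to the correlated setting, exploiting the fact that a CE imposes a no-deviation constraint on the \emph{conditional} distribution over the opponent's play given each recommended action. The key hammer remains Lemma~\ref{lemma:marginal_domination}; what changes is only the object it gets applied to.

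Let $\mu$ be an arbitrary CE and let $M$ denote the largest price that appears with positive probability in the union of the supports of the two marginals of $\mu$. By symmetry of the pricing game, we may assume without loss of generality that $M$ lies in the support of seller $1$'s marginal. Condition on seller $1$ being recommended $M$: this yields a conditional distribution $d_2$ over seller $2$'s price. By the definition of $M$, $d_2$ places zero mass strictly above $M$, so in the notation of Lemma~\ref{lemma:marginal_domination} we have $b = 0 \le \tfrac{1}{24k}$. Suppose for contradiction that $M \ge 3/k$. Then Lemma~\ref{lemma:marginal_domination} (applied with $x = M$) produces a price $x' < M$ such that $u_1(x',d_2) \ge u_1(M,d_2) + \tfrac{1}{24k^2}$, i.e., a strictly profitable unilateral deviation from the recommendation $M$. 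This contradicts the CE condition at $M$, so we must have $M \le 2/k$, yielding the claim.

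The one subtlety to verify is that Lemma~\ref{lemma:marginal_domination} applies to a \emph{conditional} distribution. But the lemma is stated for an arbitrary opposing distribution, so once we identify $d_2$ as an admissible input it plugs in directly; the weight-above-$M$ hypothesis is immediate from how $M$ was chosen. An equivalent presentation is a descending induction mirroring Lemma~\ref{lemma:unique_ne}: first use Corollary~\ref{corr:max_dominated} (applied to the conditional distribution at the maximum recommendation) to rule out any CE mass at price $1$, then iteratively apply Lemma~\ref{lemma:marginal_domination} on conditional distributions to rule out prices $\tfrac{k-1}{k}, \tfrac{k-2}{k}, \ldots, \tfrac{3}{k}$ in turn. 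I expect the main (minor) obstacle to be simply being pedantic about which distribution the lemma is being evaluated against at each step, since in the CE setting this is a conditional distribution rather than a marginal.
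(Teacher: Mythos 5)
Your argument is correct and uses the same essential mechanism as the paper: condition on a recommended price, observe that the resulting conditional distribution of the opponent satisfies the hypotheses of Lemma~\ref{lemma:marginal_domination} (equivalently Corollary~\ref{corr:max_dominated}), and derive a contradiction with the CE incentive constraint. The paper phrases this as a descending induction from price $1$, while you condition directly on the global maximum $M$ of the joint support (so that $b = 0$ trivially); this is a slightly cleaner presentation of the same proof, and you note the inductive variant yourself.
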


\begin{proof}
We will show this by induction. 

Base Case: No Correlated Equilibria of this game have support on price $p_k$. To see this, note that by Corollary~\ref{corr:max_dominated}, price $1$ is dominated for both players in the CE. Therefore, no CE can have support on $1$.   

Inductive Hypothesis: If no CE has support on price $\frac{m}{k}$ for some integer $m$ and $m \geq 4$, then no CE can have support on price $\frac{m-1}{k}$. Assume for contradiction that this is not the case, and consider some CE in which one of the players is recommended price $\frac{m-1}{k}$. Consider the distribution $d$ of the other player conditional on this player being recommended $\frac{m-1}{k}$. As no CE has support on any at or above $\frac{m}{k}$, $max(supp(d)) \leq \frac{m-1}{k}$, where $m-1 \geq 3$. By the definition of being a CE, $\frac{m-1}{k}$ must be a best response to $d$. But by Corollary~\ref{corr:max_dominated}, $\frac{m-1}{k}$ is strictly dominated against $d$. This is a contradiction, and therefore no CE can have support on price $\frac{m-1}{k}$.

Therefore, no correlated equilibria of the game have support on any prices above $\frac{2}{k}$.
\end{proof}

\section{Generalization of Results to Logit Model}

 \label{app:logit}

In this section, we show that our results are robust to variations in the pricing game model by extending to logit allocation rules. 

First, we will show that the competitive price is low, at least when $\tau$ is sufficiently large. 

Fix Seller no. 2 to picking price $p$. Then, the payoff of the first seller for a price $x$ is $x \cdot C_{1}^{L,\tau}$. Let $F_p(x) := x \cdot C_{1}^{L,\tau}$ denote this payoff function (with the domain being extended to the unit interval $(0,1]$). We prove some facts about this payoff and where it is maximized. Define $\pmax(\tau) := \max\{ \frac{2}{k(1-e^{-\tau})}, \frac{2}{\tau} \}$

\begin{lemma}
\label{lemma:payoff_properties}
Given that $p > \frac{2}{\tau}$,
$F_p(x)$ is uniquely maximized at some point $x^*_p$ that is strictly less than $p$. Further, for all $x > x^*_p$; $F_p(x) < F_p(x^*_p)$ and $F'_p(x) < 0$. Additionally, if $p > \frac{2}{k(1-e^{-\tau})}$, $F_p\left(p-\frac{1}{k}\right) > F_p(p)$.
\end{lemma}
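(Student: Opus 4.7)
The plan is to reduce everything to calculus on the single-variable function $F_p(x) = \frac{x \, e^{\tau p}}{e^{\tau x} + e^{\tau p}}$ by analyzing the sign of its derivative, and then to verify the discrete inequality by direct substitution.

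First, I would compute $F_p'(x)$ using the quotient rule. After simplification the numerator factors as
\[
F_p'(x) = \frac{e^{\tau p}\bigl[\,e^{\tau p} - (\tau x - 1)\, e^{\tau x}\bigr]}{(e^{\tau x} + e^{\tau p})^2},
\]
so the sign of $F_p'(x)$ matches the sign of $e^{\tau p} - g(x)$, where I define $g(x) := (\tau x - 1)\, e^{\tau x}$. A quick differentiation gives $g'(x) = \tau^2 x\, e^{\tau x}$, which is strictly positive for $x > 0$, so $g$ is strictly increasing on $(0,\infty)$. Since $g(0) = -1 < e^{\tau p}$ and $g(x) \to \infty$, there is a unique $x^*_p \in (0,\infty)$ solving $g(x^*_p) = e^{\tau p}$; for $x < x^*_p$ we get $F_p'(x) > 0$ and for $x > x^*_p$ we get $F_p'(x) < 0$. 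This gives the unique maximizer together with the claimed monotonicity and sign of $F_p'$ for $x > x^*_p$.

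Next I would verify $x^*_p < p$ under the hypothesis $p > 2/\tau$. Substituting $x = p$ into $g$ gives $g(p) = (\tau p - 1)\, e^{\tau p}$, and since $\tau p > 2$ we have $\tau p - 1 > 1$, so $g(p) > e^{\tau p} = g(x^*_p)$. Strict monotonicity of $g$ then yields $x^*_p < p$.

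For the discrete comparison, I would just compute both sides explicitly. We have $F_p(p) = p/2$, and
\[
F_p\!\left(p - \tfrac{1}{k}\right) = \frac{(p - 1/k)\, e^{\tau p}}{e^{\tau(p - 1/k)} + e^{\tau p}} = \frac{p - 1/k}{1 + e^{-\tau/k}}.
\]
The required inequality $F_p(p-1/k) > F_p(p)$ rearranges algebraically to $p(1 - e^{-\tau/k}) > 2/k$, i.e.\ $p > \tfrac{2}{k(1-e^{-\tau/k})}$, which is then matched against the stated threshold $\tfrac{2}{k(1-e^{-\tau})}$ on $p$. The main substantive step is this last reduction: it requires only elementary manipulation of exponentials, but care is needed to make sure the exponent inside $e^{-\tau/k}$ is correctly aligned with the stated hypothesis (if the discretization $1/k$ is what appears in the exponent of the comparison price). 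This is the only place where there is any real work to do beyond routine calculus, and it is where I would be most careful to track the dependence on $\tau$ and $k$ consistently with the definition of $\pmax(\tau)$.
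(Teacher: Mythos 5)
Your calculus argument for the first two claims is essentially the paper's. The paper writes $F_p(x) = x/(e^{\tau(x-p)}+1)$, isolates the numerator of the derivative as $N_p(x) = e^{\tau(x-p)} + 1 - \tau x\, e^{\tau(x-p)}$, and shows $N_p$ is strictly decreasing on $(0,\infty)$ with $N_p(0)>0>N_p(p)$, giving a unique zero in $(0,p)$. Your auxiliary function $g(x) = (\tau x - 1)e^{\tau x}$ satisfies $N_p(x) = e^{-\tau p}\bigl(e^{\tau p} - g(x)\bigr)$, so ``$N_p$ decreasing'' and ``$g$ increasing'' are the same statement up to a positive factor, and your root-isolation argument is the same argument under a different normalization.

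The real content of your writeup is in the last part, and there you should not hedge. The paper dismisses the final claim with ``The second result follows directly by substitution,'' whereas you actually carry out the substitution, correctly obtaining $F_p(p) = p/2$ and $F_p(p-\tfrac{1}{k}) = (p - \tfrac{1}{k})/(1 + e^{-\tau/k})$, and the inequality rearranges to $p > \tfrac{2}{k(1 - e^{-\tau/k})}$. This is \emph{not} implied by the stated hypothesis $p > \tfrac{2}{k(1 - e^{-\tau})}$: since $k>1$ we have $e^{-\tau/k} > e^{-\tau}$, hence $\tfrac{2}{k(1-e^{-\tau/k})} > \tfrac{2}{k(1-e^{-\tau})}$, so the lemma's condition is strictly weaker than what the conclusion requires. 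You have found an error in the stated threshold; the correct exponent is $e^{-\tau/k}$, and the same correction propagates to the definition of $\pmax(\tau)$. Say this plainly rather than gesturing at ``alignment issues.'' Note also that the correction is harmless downstream: in the corollary with $\tau = \Omega(k)$, the quantity $1 - e^{-\tau/k}$ is bounded below by a positive constant, so $\tfrac{2}{k(1-e^{-\tau/k})}$ is still $O(1/k)$, and the asymptotic conclusions stand.
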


\begin{proof}
    Writing out the function $F_p(x) = \frac{x}{e^{\tau(x-p)}+1}$. Differentiating, $F'_p(x) = \frac{e^{\tau(x-p)} + 1 - \tau x e^{\tau(x-p)}}{(e^{\tau(x-p}+1)^2}$. Let $N_p(x)$ denote the numerator of this derivative. Since $p > \frac{2}{\tau}$, $N_p(p) < 0$. On the other hand $N_p(0) = 1- e^{-\tau p} > 0$. Taking the derivative of this numerator gives us $N'_p(x) = - \tau^2 x e^{\tau(x-p)} < 0$ for all $x > 0$. Thus, $N_p(x)$ is decreasing for all $x > 0$, implying that there is some $x^* \in (o,p)$ such that $N_p(x^*) = 0$. Thus, the function $F_p(x)$ is maximized at this $x^*$.

    The second result follows directly by substitution. 
\end{proof}

A consequence of the lemma above is that the best response (amongst the discrete set of prices) to a large enough price $p$ is always $p-\frac{1}{k}$ or lower. We strengthen this result to also hold for best-responses to distributions over prices.

\begin{lemma}
\label{lemma:br_below}

If Seller 2 plays a distribution $D$ that is supported on prices in $\{\frac{1}{k},\frac{2}{k} \cdots p^*\}$ with $p^* > \max\{ \frac{2}{k(1-e^{-\tau})}, \frac{2}{\tau} \}$, then the best response of Seller 1 is a price that is strictly smaller than $p^*$.
\end{lemma}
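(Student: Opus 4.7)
The plan is to exhibit a price $x' < p^*$ that strictly dominates $p^*$ as a response to $D$; concretely I will take $x' = p^* - 1/k$. Letting $G(x) = \mathbb{E}_{p \sim D}[F_p(x)]$ denote Seller 1's expected payoff against $D$, by linearity of expectation it suffices to verify $F_p(p^* - 1/k) \ge F_p(p^*)$ for every $p$ in the support of $D$, with strict inequality at some such $p$. Any such pointwise dominance gives $G(p^* - 1/k) > G(p^*)$, forcing Seller 1's best response to be at most $p^* - 1/k$, and hence strictly less than $p^*$.

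I would split on the value of $p$ in the support. If $p^*$ itself lies in the support, then for the contribution at $p = p^*$ I directly invoke part three of Lemma~\ref{lemma:payoff_properties}: under the hypothesis $p^* > 2/(k(1-e^{-\tau}))$, it gives $F_{p^*}(p^* - 1/k) > F_{p^*}(p^*)$, supplying the strict gap needed. For any $p < p^*$ in the support (equivalently $p \le p^* - 1/k$ since prices are on the $1/k$-grid), I argue that $F_p$ is strictly decreasing on the interval $[p^* - 1/k,\, p^*]$. When $p > 2/\tau$, the first two parts of Lemma~\ref{lemma:payoff_properties} give a unique maximizer $x^*_p < p \le p^* - 1/k$ with $F_p$ strictly decreasing on $(x^*_p, \infty)$; in particular $F_p(p^* - 1/k) > F_p(p^*)$.

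The remaining case is a $p \le 2/\tau$ appearing in the support, which Lemma~\ref{lemma:payoff_properties} does not address verbatim. My plan is to handle this either by (i) observing that the proof of Lemma~\ref{lemma:payoff_properties} actually establishes the unique-max-then-strictly-decreasing structure for $F_p$ for all $p > 0$ (only the bound $x^*_p < p$ needs $p > 2/\tau$), and then verifying $x^*_p < p^* - 1/k$ by plugging $p^* - 1/k$ into $N_p$ and using $p^* > 2/\tau$ to conclude $N_p(p^* - 1/k) < 0$, or (ii) checking the inequality directly: with $A := e^{\tau(p^* - p)} \ge 1$ and $s := e^{-\tau/k}$, cross-multiplying $F_p(p^* - 1/k) \ge F_p(p^*)$ reduces to $k p^* A (1 - s) \ge 1 + A$, whose tightest case $A = 1$ is exactly the condition isolated in Lemma~\ref{lemma:payoff_properties} part three and is implied by $p^* > \pmax(\tau)$.

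The main obstacle I expect is precisely the small-$p$ case, since the lemma's statement is silent there; once it is dispatched (most cleanly by route (i), extracting the regularity of $F_p$ that the proof already produces), the pointwise comparisons assemble by linearity of expectation into the strict dominance of $p^* - 1/k$ over $p^*$, which is what the lemma asserts.
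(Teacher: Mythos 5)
Your approach is essentially the paper's: show that $p^* - \frac{1}{k}$ strictly dominates $p^*$ pointwise against every $p$ in $D$'s support using the shape of $F_p$, then sum over the support. You also correctly identify a real gap in the paper's own proof, which invokes Lemma~\ref{lemma:payoff_properties} for every support price $i$ without acknowledging that the lemma is stated only under $p > \frac{2}{\tau}$; $D$ may well place weight at or below that threshold. Of your two patches, route (ii) is the one that closes the gap cleanly: the reduction $k p^* A(1-s) \ge A+1$ with $A = e^{\tau(p^* - p)} \ge 1$, $s = e^{-\tau/k}$, treats every support price uniformly, and the tight case $A = 1$ is the only one you need to check. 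Route (i) as stated has a snag: at the worst support point $p = p^* - \frac{1}{k}$ you get $N_p(p^* - \frac{1}{k}) = 2 - \tau(p^* - \frac{1}{k})$, which is negative only when $p^* > \frac{2}{\tau} + \frac{1}{k}$, and that does not follow from $p^* > \frac{2}{\tau}$ alone; so lean on route (ii). Two further points to tighten. First, $G(p^*-\frac{1}{k}) > G(p^*)$ by itself does not force the best response below $p^*$ --- some $q > p^*$ could a priori win --- but unimodality of $F_p$ together with $F_p(p^*-\frac{1}{k}) > F_p(p^*)$ forces $p^*$ past $F_p$'s peak, hence $F_p(q) < F_p(p^*) < F_p(p^*-\frac{1}{k})$ for all $q > p^*$; state this (the paper's proof explicitly dominates all $q \ge p^*$). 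Second, a direct substitution into $F_p(p - \frac{1}{k}) > F_p(p)$ yields the threshold $\frac{2}{k(1 - e^{-\tau/k})}$, not the $\frac{2}{k(1 - e^{-\tau})}$ appearing in $\pmax$; this looks like a typo in the paper that you inherit by phrasing the tight $A=1$ case in terms of $\pmax$, and it is precisely the constant your reduction requires.
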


\begin{proof}
    Let $F_D(x)$ denote the payoff obtained by responding to distribution $D$ with a real valued price $x$. Let distribution $D$ put weight $\alpha_i$ on price  $i \in \{\frac{1}{k},\frac{2}{k} \cdots p^*\}$. Thus, we can rewrite $F_D(x)$ as $F_D(x) = \sum_{i=\frac{1}{k}}^{p^*} \alpha_i F_i(x)$. Lemma~\ref{lemma:payoff_properties} tells us that there exists a $x^* < p^*$ such that for all $x > x^*$, $F_i(x) < F_i(x^*)$ and $F'_i(x) < 0$ for all $i \in \{\frac{1}{k},\frac{2}{k} \cdots p^*\}$. Additionally, the same lemma guarantees that $F_i\left(p^*-\frac{1}{k}\right) < F_i(p^*)$ for all such $i$ (from the second part of the lemma for $i=p^*$ and from the first part of the lemma for all other $i$). Thus, $p^*-\frac{1}{k}$ is a better response than any price greater than or equal to $p^*$, which completes the proof of this lemma.
    \end{proof}

\begin{theorem}
\label{thm:ne_logit}
    All Nash equilibria of the stage game induced by the logit model have strategies supported only on prices up to $\pmax(\tau) + \frac{1}{k}$ where $\pmax(\tau)  =  \max\{ \frac{2}{k(1-e^{-\tau})}, \frac{2}{\tau} \}$.
\end{theorem}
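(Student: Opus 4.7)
\medskip

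The plan is to mirror the induction-free version of the Bertrand Nash proof (Lemma~\ref{lemma:unique_ne}), replacing the Bertrand undercut dominance (Lemma~\ref{lemma:marginal_domination}, Corollary~\ref{corr:max_dominated}) with its logit analog, Lemma~\ref{lemma:br_below}. The content of Lemma~\ref{lemma:br_below} is exactly what is needed: once the opponent's distribution is supported below some price $p^*$ with $p^* > \pmax(\tau)$, then pricing at $p^*$ (or higher) is strictly dominated as a best response. So the proof should reduce to a one-shot ``take the largest price in the NE support, apply the lemma, get a contradiction'' argument, with no induction needed.

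In more detail: let $(d_1,d_2)$ be any Nash equilibrium of the logit stage game, and let $P := \max\{\max(\mathrm{supp}(d_1)),\max(\mathrm{supp}(d_2))\}$ be the largest price assigned positive probability by either seller. Without loss of generality (by the symmetry of the logit allocation rule) assume $P \in \mathrm{supp}(d_1)$. Suppose toward contradiction that $P > \pmax(\tau) + \frac{1}{k}$. Then in particular $P > \pmax(\tau)$, and by construction $\mathrm{supp}(d_2) \subseteq \{\tfrac{1}{k}, \tfrac{2}{k}, \ldots, P\}$. Applying Lemma~\ref{lemma:br_below} with $p^{*} = P$ yields that every best response of Seller $1$ to the distribution $d_2$ is a price strictly less than $P$. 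On the other hand, because $(d_1, d_2)$ is a Nash equilibrium and $P$ lies in the support of $d_1$, the price $P$ must itself be a best response of Seller $1$ to $d_2$. This contradiction shows $P \le \pmax(\tau) + \frac{1}{k}$, which forces both $d_1$ and $d_2$ to be supported on prices at most $\pmax(\tau) + \frac{1}{k}$, as claimed.

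There is essentially no hard step once Lemma~\ref{lemma:br_below} is in hand: all of the analytic work (monotonicity of $F'_p$, the existence of a peak below $p$, and the fact that $F_p(p - 1/k) > F_p(p)$ for $p$ above the $\pmax(\tau)$ threshold) has already been discharged in Lemmas~\ref{lemma:payoff_properties}--\ref{lemma:br_below}. The only subtlety worth flagging in the write-up is the $+1/k$ slack in the stated bound: it is there to accommodate the fact that $\pmax(\tau)$ need not itself be a grid point of $\cP$, so the largest grid point potentially permitted by the lemma's strict inequality $p^* > \pmax(\tau)$ sits within $\frac{1}{k}$ of $\pmax(\tau)$. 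A minor sanity check to include is that the symmetry appeal (``WLOG $P \in \mathrm{supp}(d_1)$'') is justified by the symmetry of the allocation rule $C^{L,\tau}$, so that an identical argument applies with the sellers' roles swapped if instead $P \in \mathrm{supp}(d_2)$.
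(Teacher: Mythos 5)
Your proof is correct and follows the paper's approach: both derive a contradiction from Lemma~\ref{lemma:br_below} applied to a maximal-price support element of a putative Nash equilibrium. Yours is a minor streamlining---by defining $P$ as the maximum over \emph{both} sellers' supports up front (so that $\mathrm{supp}(d_2)\subseteq\{\tfrac{1}{k},\dots,P\}$ holds immediately), you invoke the lemma only once, whereas the paper invokes it twice (first to push the opponent's support below $p$, then again for the contradiction); both arguments are sound.
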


\begin{proof}
    This result follows almost directly from Lemma~\ref{lemma:br_below}. Consider any Nash equilibrium of this game $(D_1, D_2)$ where $D_1$ and $D_2$ are distributions on prices in $\cP$. We prove that the max support element in either distribution is at most $\pmax(\tau)+\frac{1}{k}$. 
    For sake of contradiction, assume that the max support element of either distribution, wlog $D_1$, is $p \ge \pmax(\tau)+\frac{2}{k}$. 
    This would imply that seller 2's best response set can only consist of prices strictly below $p$, by invoking Lemma~\ref{lemma:br_below}. However, then invoking the same lemma from seller 1's best response perspective, $D_1$ must be supported on prices below $p-\frac{1}{k}$ since $D_2$'s max support price is $p-\frac{1}{k}$, this contradicts $p$ being in $D_1$'s support, completing the proof  . 
\end{proof}

As long as $\pmax$ is reasonably small, this gives us the desired result- which we capture in the following corollary. Specifically, we need the temperature parameter to be large enough that we are sufficiently sensitive to a change in price of the order of $\frac{1}{k}$.

\begin{corollary}
    For $\tau = \Omega(k)$ and $k \ge 10$, all Nash equilibria strategies of the stage game defined by the logit model are supported on prices that are $O\left( \frac{1}{k}\right)$
\end{corollary}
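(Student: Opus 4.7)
The corollary follows by directly invoking Theorem~\ref{thm:ne_logit} and then bounding $\pmax(\tau) + \tfrac{1}{k}$ under the hypothesis $\tau = \Omega(k)$ with $k \geq 10$. Concretely, by Theorem~\ref{thm:ne_logit}, the support of any Nash equilibrium strategy is contained in prices up to $\pmax(\tau) + \tfrac{1}{k}$, where $\pmax(\tau) = \max\bigl\{\tfrac{2}{k(1 - e^{-\tau})},\; \tfrac{2}{\tau}\bigr\}$. So it suffices to show each term in the $\max$ is $O(1/k)$.

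The plan is to handle the two arguments of the $\max$ separately. For the term $\tfrac{2}{\tau}$, the assumption $\tau = \Omega(k)$ gives $\tau \geq c k$ for some constant $c > 0$, so $\tfrac{2}{\tau} \leq \tfrac{2}{c k} = O(1/k)$. For the term $\tfrac{2}{k(1 - e^{-\tau})}$, I will use that $\tau \geq c k \geq 10 c$ (using $k \geq 10$), which means $e^{-\tau} \leq e^{-10c}$ is bounded away from $1$ by an absolute constant, so $1 - e^{-\tau} \geq 1 - e^{-10c} =: \kappa > 0$ and hence $\tfrac{2}{k(1 - e^{-\tau})} \leq \tfrac{2}{\kappa k} = O(1/k)$.

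Combining, $\pmax(\tau) = O(1/k)$, and adding the extra $\tfrac{1}{k}$ from Theorem~\ref{thm:ne_logit} keeps the bound at $O(1/k)$. There is essentially no obstacle here beyond bookkeeping; the substantive content is already contained in Theorem~\ref{thm:ne_logit}, and this corollary is just verifying that when the temperature is large enough (linear in $k$), the denominator $1 - e^{-\tau}$ is bounded below by a constant, so both arguments of the $\max$ collapse to the desired $O(1/k)$ rate.
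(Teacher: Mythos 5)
Your proof is correct and matches what the paper intends: the paper does not spell out a proof of this corollary, treating it as a direct consequence of Theorem~\ref{thm:ne_logit}, and your bookkeeping — bounding $2/\tau$ by $O(1/k)$ from $\tau = \Omega(k)$, and bounding $1 - e^{-\tau}$ away from zero by a constant so that $2/(k(1-e^{-\tau})) = O(1/k)$ — is exactly the calculation that is implicitly intended.
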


We will now prove the same two key results that we proved in Section~\ref{sec:bertrand}, but in the more general logit model. Unlike the results above, these results hold regardless of the value of $\tau$.

\begin{enumerate}
    \item While playing against any no-regret learner~\footnote{We follow the convention in prior work in renaming the leader and follower as the learner and optimizer where the leader plays a no-(swap-)regret algorithm. This is helpful in clarifying the roles of the players, as we will use the fact that the optimizer, playing against a no-swap-regret algorithm, can get their \emph{leader} value in the stage game.}, the optimizer can always guarantee themselves at least $\Omega_{k}(T)$ net payoff, even when restricted to simple, non-responsive strategies. In fact, they attain $\Omega_{k}(T)$ net payoff even by playing a uniform distribution over prices each day.
    \item If an optimizer attains $\Omega_{k}(T)$ net payoff agains a no-regret Learner, that no-regret Learner also obtains $\Omega_{k}(T)$ net payoff.
\end{enumerate}

We start by showing that the uniform price distribution also does well against any no-regret algorithm in the Logit model. Note that this result holds for any $\tau$.

\begin{lemma} For any no-regret algorithm $\cA^{noreg}$ and the static uniformly random algorithm $\cA^{r}$, in the multinomial-logit repeated game,  \begin{equation}
    U_{o}(\cA_{l}^{noreg},\cA_{o}^{r}) \geq \Omega_{k}(T)
\end{equation}\label{lem:random} 
 \label{lem:logit_random}
\end{lemma}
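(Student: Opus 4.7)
The plan is to mirror the Bertrand argument in Lemma~\ref{lemma:uniform_is_good_enough} while replacing the sharp indicator allocation with the smooth logit allocation. By Lemma~\ref{lemma:fixed_strategy_nr} (which only uses that payoffs are nonnegative, and is therefore model-agnostic), if the optimizer plays the static uniform strategy $r$ each round against any no-regret learner, their expected total payoff is at least $u_o(BR^*_\ell(r), r)\cdot T - o(T)$, where $BR^*_\ell(r)$ denotes the learner best-response to $r$ that is worst for the optimizer. The task therefore reduces to lower bounding $u_o(BR^*_\ell(r), r)$ by an absolute constant (in particular, independent of $k$ and of $\tau$).

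The first step will be to lower bound any learner best-response price $p^*_\ell$ by a constant $c_1>0$. Since $u_\ell(p, r)\le p$ trivially (the allocation lies in $[0,1]$), it suffices to exhibit a single learner price whose payoff against $r$ is at least $c_1$. I would take $p_\ell=1/2$: for every optimizer price $i/k$ with $i/k\ge 1/2$ the logit share going to the learner satisfies
\[
\frac{e^{\tau i/k}}{e^{\tau/2}+e^{\tau i/k}}\ \ge\ \frac12,
\]
valid for every $\tau\ge 0$ because the logistic function is $\ge 1/2$ on nonnegative inputs. Summing this over the roughly $k/2$ such prices and multiplying by $p_\ell=1/2$ (and using $k\ge 20$) yields $u_\ell(1/2,r)\ge \tfrac18$, so any best response must have price $p^*_\ell \ge \tfrac18$.

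The second step will symmetrically exploit that when $i/k \le p^*_\ell$, the optimizer's share
\[
\frac{e^{\tau p^*_\ell}}{e^{\tau i/k}+e^{\tau p^*_\ell}}\ \ge\ \frac12
\]
holds for every $\tau\ge 0$. Restricting the sum $u_o(p^*_\ell, r) = \tfrac{1}{k}\sum_{i=1}^{k}(i/k)\cdot [\text{share}]$ to the $\Omega(k p^*_\ell)$ indices with $i/k < p^*_\ell$ and using $p^*_\ell \ge 1/8$, a direct arithmetic-sum computation gives a lower bound of order $(p^*_\ell)^2 = \Omega_k(1)$. Combining with Lemma~\ref{lemma:fixed_strategy_nr} finishes the proof.

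The main potential obstacle is ensuring the bound is uniform in $\tau$, since the logit allocation degenerates in both the $\tau\to 0$ limit (constant $1/2$ split) and the $\tau\to\infty$ limit (Bertrand-like discontinuity). The plan avoids this by using only the elementary ``if your price is weakly lower than the opponent's then you capture at least half the demand'' inequality, which holds at every $\tau\ge 0$; so I would not need any case split on $\tau$. A minor bookkeeping concern is the discretization at $i/k = 1/2$ or $i/k = p^*_\ell$, which I would handle by taking the floor/ceiling appropriately and using $k\ge 20$ to absorb the lower-order loss into the constants, just as in the Bertrand proof.
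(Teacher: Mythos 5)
Your proposal follows essentially the same route as the paper's proof: invoke Lemma~\ref{lemma:fixed_strategy_nr}, then show any learner best response to the uniform distribution exceeds a constant (both arguments price the learner at $1/2$ and use that the logit share is at least $1/2$ when undercutting, giving a bound of $1/8$), and finally lower bound the optimizer's uniform-pricing payoff against such a learner price by restricting attention to the undercutting optimizer prices. Your step three is in fact slightly more careful than the paper's, since you explicitly retain the factor-$1/2$ logit share for the optimizer rather than treating it as $1$, but the structure and the resulting $\Omega_k(1)$ bound are identical.
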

\begin{proof}
By Lemma~\ref{lem:static}, the optimizer gets payoff at least $u_{o}(r,BR_{l}(r)) \cdot T - o(T)$ by playing the static uniformly random strategy. Note that by pricing at $\frac{1}{2}$ against $r$, the learner's payoff in the stage game would be at least

\begin{align*}
\frac{1}{k}  \sum_{i=1}^{k} \frac{e^{\tau p_{i}}}{e^{\frac{\tau}{2}}+e^{\tau p_{i}}}
& = \frac{1}{2k}  \sum_{i = 1}^{k} \frac{1}{e^{\tau(\frac{1}{2}-p_{i})}+1}
\\ & > \frac{1}{2k} \sum_{i=\frac{k}{2}}^{k} \frac{1}{e^{\tau(\frac{k}{2}-p)}+1}
\\ & > \frac{1}{2k} \sum_{i=\frac{k}{2}}^{k} \frac{1}{e^{\tau(\frac{k}{2}-\frac{k}{2})}+1}
\\ & = \frac{1}{2k} \sum_{i=\frac{k}{2}}^{k} \frac{1}{2} = \sum_{i=\frac{k}{2}}^{k}\frac{1}{4k}   = \frac{1}{8}
\end{align*}

Thus, as the learner can get at least $\frac{1}{8}$, their best response price will always be at least $\frac{1}{8}$, as otherwise their payoff would be strictly smaller than $\frac{1}{8}$. Therefore, if the optimizer plays a uniform distribution, the follower's best response is to price at some value  $\geq \frac{1}{8}$. Thus, the payoff of the optimizer is 

\begin{align*}
u_{o}(r,BR_{l}(r)) \cdot T - \ o(T) 
    & \geq u_{o}(r,\frac{1}{8}) \cdot T - o(T) \tag{As $u_{o}(r,p)$ is increasing in $p$.} 
   \\ & \geq \mathbb{E}[r | r < \frac{1}{8}] \cdot  Pr(r < \frac{1}{8}) \cdot T - o(T)
    \\ & = \frac{1}{16} \cdot  \frac{1}{8} \cdot T - o(T)
    \\ & = \frac{1}{128} T - o(T)
    \\ & = \Omega_{k}(1) \cdot T
\end{align*}

\end{proof}

Next, we show that the benefit of these high prices also extends to the learner in the Logit model. This proof again holds regardless of $\tau$.

\begin{lemma} For any no-regret learner algorithm $\cA^{noreg}_{l}$, and for any optimizer algorithm $\cA_{o}$ in a multinomial-logit repated game, if $$U_{o}(\cA^{noreg}_{l}, \cA_{o}) \geq \Omega_{k}(T)$$ then $$U_{l}(\cA^{noreg}_{l},\cA_{o}) \geq \Omega_{k}(T)$$\label{lem:logit_equal_looting}
\end{lemma}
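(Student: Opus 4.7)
The plan is to mirror the structure of the proof of Theorem~\ref{thm:equal_looting} in the Bertrand case, with two adjustments that account for the smoother logit allocation. I will first argue, using only the trivial bound that per-round payoff is at most $\min(1, p_o)$ (market share is bounded by $1$ regardless of the competition rule), that the optimizer must price above some threshold with non-trivial frequency. Then I will exhibit a fixed (static) deviation for the learner that is guaranteed to extract a constant-fraction payoff in exactly those rounds, and invoke the no-regret guarantee.

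More concretely, let $U_{o}(\cA^{noreg}_{l}, \cA_{o}) \geq c\cdot T$ for some constant $c \in (0,1]$ independent of $k$. The first step is to establish that the optimizer prices at or above $c/2$ on an average fraction at least $c/2$ of the rounds. If $q$ is the average frequency of rounds in which $p_o \geq c/2$, then upper bounding by assuming the optimizer captures the entire market in every round gives $q \cdot 1 + (1-q)\cdot (c/2) \geq c$, from which $q \geq c/(2-c) \geq c/2$. This is identical to the corresponding step in Theorem~\ref{thm:equal_looting} and uses no model-specific structure beyond the fact that market share is bounded by $1$.

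The second step is where the logit model requires a small tweak. Let $q^* = \lceil k c/4 \rceil / k \in \cP$, so that $q^* \approx c/4 < c/2$ for $k$ sufficiently large (the condition $k\geq 20$ and $c = \Omega_k(1)$ ensures $q^* \geq 1/k$). Consider the fixed pure action where the learner prices at $q^*$ on every round. In any round where the optimizer prices at some $p_o \geq c/2 > q^*$, the learner's allocation is
\[
C_1^{L,\tau}(q^*, p_o) \;=\; \frac{e^{\tau p_o}}{e^{\tau q^*} + e^{\tau p_o}} \;\geq\; \tfrac{1}{2},
\]
since $p_o > q^*$ implies $e^{\tau p_o} \geq e^{\tau q^*}$ for every $\tau \geq 0$ (the case $\tau = 0$ gives exactly $1/2$). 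Thus in such a round the learner earns at least $q^*/2 \geq c/8 - O(1/k)$. Averaging over all rounds and using the frequency bound $q \geq c/2$ from the first step yields an average payoff from this fixed action of at least $(c/2)\cdot(c/8) - O(1/k) = \Omega_k(1)$.

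Finally, because $\cA^{noreg}_{l}$ is a no-regret algorithm, its total payoff against $\cA_{o}$ is at least the total payoff of the best fixed action minus $o(T)$; in particular, $U_l(\cA^{noreg}_{l}, \cA_{o}) \geq \Omega_k(T) - o(T) = \Omega_k(T)$, as desired. The only delicate point is ensuring the deviation price $q^*$ lies in the discrete grid $\cP$ and is strictly below $c/2$; this is the reason for rounding with $\lceil\cdot\rceil$ and for assuming $c = \Omega_k(1)$, and the main obstacle the plan must address is simply to verify these elementary discretization details carefully without cluttering the argument.
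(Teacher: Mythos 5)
Your proof is correct and follows essentially the same structure as the paper's: first show by contradiction that the optimizer must price at or above $c/2$ on at least a $c/2$ fraction of rounds (a model-independent market-share bound), then exhibit a fixed learner price whose logit allocation is at least $1/2$ whenever the optimizer prices above it, and invoke the no-regret guarantee. The only cosmetic difference is that the paper takes the deviation price to be $c/2$ itself (getting $c^2/8$), while you round down to $\approx c/4$ to guarantee strict undercutting and grid membership (getting $c^2/16 - O(1/k)$); both yield $\Omega_k(T)$.
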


\begin{proof}
Say that the optimizer's payoff is $c \cdot T$ for some $c$ betwen $0$ and $1$. 

Note that, in order to achieve $c$ payoff on average, the optimizer must have priced at or above $\frac{c}{2}$ with frequency at least $\frac{c}{2}$. Assume for contradiction that the optimizer does not do this. Then the optimizer's average payoff is at most 
\begin{align*}
Pr\left[o_{p} \geq \frac{c}{2}\right] \cdot 1 + \left(1 - Pr\left [o_{p} \geq \frac{c}{2}\right ]\right) \cdot \frac{c}{2} 
& < \frac{c}{2} + \left (1-\frac{c}{2}\right ) \cdot \frac{c}{2} \\
& = c - \frac{c^{2}}{4}
\end{align*}
This is strictly less than $c$, and thus by contradiction, the optimizer must have priced at or above $\frac{c}{2}$ with frequency at least $\frac{c}{2}$. 

Given this, one fixed action that the learner could have taken is to price at $\frac{c}{2}$. The learner's average payoff would therefore be at least

\begin{align*}
& \frac{e^{\tau \frac{c}{2}}}{e^{\tau \frac{c}{2}} + e^{\tau \frac{c}{2}}}Pr \left [o_{p} = \frac{c}{2}\right] \cdot \frac{c}{2} + \sum_{p = \frac{c}{2} +\frac{1}{k}; p = p + \frac{1}{k}}^{1}\frac{e^{\tau o_p}}{e^{\tau l_p} + e^{\tau \frac{c}{2}}}Pr \left [o_{p} > \frac{c}{2}\right] \cdot \frac{c}{2}
\\& = \frac{1}{2} \cdot Pr \left [o_{p} = \frac{c}{2}\right] \cdot \frac{c}{2} + \sum_{p = \frac{c}{2} +\frac{1}{k}}^{1}\frac{e^{\tau o_p}}{e^{\tau l_p} + e^{\tau \frac{c}{2}}}Pr \left [o_{p} > \frac{c}{2}\right] \cdot \frac{c}{2} \\
& \geq \frac{1}{2} \cdot Pr \left [o_{p} = \frac{c}{2}\right] \cdot \frac{c}{2} + \sum_{p = \frac{c}{2} +\frac{1}{k}}^{1}\frac{1}{2}Pr \left [o_{p} > \frac{c}{2}\right] \cdot \frac{c}{2} \\
& = \frac{1}{2}Pr \left [o_{p} \geq \frac{c}{2}\right] \cdot \frac{c}{2} \\
& \geq \frac{c}{4} \cdot \frac{c}{2}\\
& = \frac{c^{2}}{8} 
\end{align*}

By the no-regret guarantee of the learner, their total payoff must be at least $\frac{c^{2}T}{8} - o(T) = \Omega_{k}(T)$. 

\end{proof}

We can now combine these results, to show that, if the learner deploys any no-regret algorithm in the Multinomial-logit-based model and the optimizer responds via any strategy which gets them payoff at least that of static random pricing, then prices are supra-competitive and both the learner and the optimizer get a constant fraction of the profits.

\begin{theorem} \label{thm:logit_generalized} In a Multinomial-logit-based repeated pricing game, for any no-regret learner algorithm $\cA_{l}^{noreg}$ and any optimizer algorithm $\cA_{o}$ such that 
\begin{equation}
    U_{o}(\cA_{l}^{noreg}, \cA_{o}) \geq U_{o}(\cA_{l}^{noreg}, \cA_{o}^{r})
\end{equation}
Where $\cA_{o}^{r}$ is the static uniformly random algorithm, 
\begin{equation}
    U_{o}(\cA_{l}^{noreg}, \cA_{o}) = \Omega_{k}(T)
\end{equation}
 
\begin{equation}
    U_{l}(\cA_{l}^{noreg}, \cA_{o}) = \Omega_{k}(T)
\end{equation}

And the average price is $\Omega_{k}(1)$.
\end{theorem}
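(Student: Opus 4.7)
The plan is to mirror exactly the proof of the analogous statement in the Bertrand model (Theorem in Section~\ref{sec:supra}), and show that the logit-specific ingredients established earlier in the appendix---Lemma~\ref{lem:logit_random} and Lemma~\ref{lem:logit_equal_looting}---combine cleanly with Observation~\ref{obs:value_transfer} to yield all three conclusions. Since every heavy computation has already been carried out in those lemmas, the proof should amount to a two-step chain of inferences.

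First I would obtain the lower bound on the optimizer's payoff. By Lemma~\ref{lem:logit_random}, the static uniformly random algorithm $\cA_{o}^{r}$ satisfies $U_{o}(\cA_{l}^{noreg}, \cA_{o}^{r}) \geq \Omega_{k}(T)$ against any no-regret learner in the logit model. The hypothesis of the theorem says that $\cA_{o}$ does at least as well as $\cA_{o}^{r}$ against $\cA_{l}^{noreg}$, so we immediately conclude $U_{o}(\cA_{l}^{noreg}, \cA_{o}) \geq \Omega_{k}(T)$, giving the first claim.

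Next, I would apply Lemma~\ref{lem:logit_equal_looting}, which states precisely that whenever the optimizer extracts $\Omega_{k}(T)$ payoff against a no-regret learner in the logit model, the learner also extracts $\Omega_{k}(T)$ payoff. Plugging in the bound from the previous step gives $U_{l}(\cA_{l}^{noreg}, \cA_{o}) \geq \Omega_{k}(T)$, establishing the second claim. Finally, to obtain the statement about prices, I would invoke Observation~\ref{obs:value_transfer}: the average buyer price over $T$ rounds equals the sum of the time-averaged payoffs of the two sellers. Since each of these averages is $\Omega_{k}(1)$ by the previous two bounds (dividing by $T$), the average buyer price is also $\Omega_{k}(1)$.

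There is essentially no obstacle beyond verifying that these ingredients really apply to the logit setting, which is already done in Appendix~\ref{app:logit}. I would note only two small housekeeping points: (i) the constants in Lemma~\ref{lem:logit_random} and Lemma~\ref{lem:logit_equal_looting} are independent of $\tau$, so the resulting $\Omega_{k}(\cdot)$ bounds do not deteriorate as $\tau$ varies, and (ii) Observation~\ref{obs:value_transfer} holds in the logit model because total demand is normalized to $1$ in every round, so the buyer's total expenditure in round $t$ is exactly $p_{l,t} C_{l}^{L,\tau}(p_{l,t},p_{o,t}) + p_{o,t} C_{o}^{L,\tau}(p_{l,t},p_{o,t})$, which is the sum of the two sellers' instantaneous payoffs. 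Taking the time average then yields the identity used in the last step.
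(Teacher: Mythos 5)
Your proposal is correct and follows exactly the same three-step chain as the paper's own proof: apply Lemma~\ref{lem:logit_random} plus the hypothesis to lower-bound $U_o$, feed that into Lemma~\ref{lem:logit_equal_looting} to lower-bound $U_l$, and conclude via Observation~\ref{obs:value_transfer}. The housekeeping remarks about $\tau$-independence and the validity of the value-transfer identity in the logit model are accurate and, if anything, slightly more explicit than the paper itself.
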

\begin{proof}
    By Lemma~\ref{lem:logit_random}, the optimizer gets payoff at least $\Omega_{k}(1)\cdot T$ against any no-regret algorithm by playing static random pricing. Therefore, the optimizer's payoff utilizing any strategy which is better than static random pricing also gives them a payoff of at least $\Omega_{k}(1)\cdot T$.
    Furthermore, by Lemma~\ref{lem:logit_equal_looting}, this implies that the learner also gets a payoff of at least $\Omega_{k}(1)\cdot T$. Finally, by Observation~\ref{obs:value_transfer}, this implies that the average price is $\Omega_{k}(1)$.
\end{proof}

\end{document}